\newcommand{\be}{\begin{equation}}
\newcommand{\ee}{\end{equation}}
\newcommand{\ba}{\begin{eqnarray}}
\newcommand{\ea}{\end{eqnarray}}
\newtheorem{theorem}{Theorem}
\newtheorem{corollary}{Corollary}
\newtheorem{definition}{Definition}
\newtheorem{proposition}{Proposition}
\def\>{\rangle}
\def\<{\langle}
\begin{document}

\title{Multi-copy adaptive local discrimination: Strongest possible two-qubit nonlocal bases}

\author{Manik Banik}
\affiliation{School of Physics, IISER Thiruvananthapuram, Vithura, Kerala 695551, India.}

\author{Tamal Guha}
\affiliation{Physics and Applied Mathematics Unit, Indian Statistical Institute, 203 B.T. Road, Kolkata 700108, India.}

\author{Mir Alimuddin}
\affiliation{Physics and Applied Mathematics Unit, Indian Statistical Institute, 203 B.T. Road, Kolkata 700108, India.}

\author{Guruprasad Kar}
\affiliation{Physics and Applied Mathematics Unit, Indian Statistical Institute, 203 B.T. Road, Kolkata 700108, India.}

\author{Saronath Halder}
\affiliation{Harish-Chandra Research Institute, HBNI, Chhatnag Road, Jhunsi, Allahabad 211019, India.} 

\author{Some Sankar Bhattacharya}
\affiliation{Department of Computer Science, The University of Hong Kong, Pokfulam Road, Hong Kong.}

\begin{abstract}
Ensembles of composite quantum states can exhibit nonlocal behaviour in the sense that their optimal discrimination may require global operations. Such an ensemble containing $N$ pairwise orthogonal pure states, however, can always be perfectly distinguished under adaptive local scheme if $(N-1)$ copies of the state are available. In this letter, we provide examples of orthonormal bases in two-qubit Hilbert space whose adaptive discrimination require $3$ copies of the state. For this composite system we analyze multi-copy adaptive local distinguishability of orthogonal ensembles in full generality which in turn assigns varying nonlocal strength to different such ensembles. We also come up with ensembles whose discrimination under adaptive separable scheme require less number of copies than adaptive local schemes. Our construction finds important application in multipartite secret sharing tasks and indicates towards an intriguing super-additivity phenomenon for locally accessible information. 
\end{abstract}


\maketitle	
{\it Introduction.--} Second quantum revolution aims to harness individual quantum system for storing, transferring, and manipulating information \cite{Chuang}. In many of the information protocols the elementary step is reliable decoding of the classical information encoded in some physical system. When involved systems are quantum, several interesting observations appear that otherwise are not present in classical world. For instance, classical message encoded in the states of a composite quantum system may not be completely retrieved under local quantum operations and classical communications (LOCC) among the spatially separated subsystems. Such a set of states is called nonlocal as their optimal discrimination require global operation(s) on the composite system.  In a seminal paper \cite{Bennett99}, Bennett {\it et al.} provided examples of orthonormal bases in $(\mathbb{C}^3)^{\otimes2}$ and $(\mathbb{C}^2)^{\otimes3}$ that are not perfectly distinguishable under LOCC. These examples are quite striking as they contain only product states and hence introduced the phenomenon of `quantum nonlocality without entanglement'. For the simplest multipartite system $(\mathbb{C}^2)^{\otimes2}$, example of LOCC indistinguishable ensemble was first identified in \cite{Ghosh01}. Unlike Bennett {\it et al.}'s examples, an orthogonal ensemble in $(\mathbb{C}^2)^{\otimes2}$ must contain entangled state(s) for local indistinguishability \cite{Walgate02}. Historically, a difference between global and local distinguishability for a two-qubit ensemble containing only product states was conjectured by Peres \& Wootters \cite{Peres91}. Importantly, Peres-Wootters' ensemble contains non-orthogonal states and recently their conjecture has been proven to be true \cite{Chitambar13}. The results in \cite{Bennett99,Ghosh01,Walgate02,Peres91} initiated a plethora of studies on the local state discrimination problem that turn out to be deeply interlinked with quantum entanglement theory (see \cite{Bennett99(1),Walgate00,Horodecki03,DiVincenzo03,Ghosh04,Watrous05,Niset06,Duan07,Calsamiglia10,Bandyopadhyay11,Chitambar14,Halder18,Demianowicz18,Halder19,Agrawal19,Rout19,Rout20,Horodecki09} and references there in). Subsequently, local indistinguishability has been identified as crucial primitive for cryptography protocols, such as quantum data hiding \cite{Terhal01,Eggeling02,Matthews09} and quantum secret sharing \cite{Hillery99,Markham08,Rahaman15}; it also underlies the non-zero gap between single-shot and multi-shot classical capacities of noisy quantum channels \cite{Fuchs}, and has been shown to have foundational appeal \cite{Pusey12,Bandyopadhyay17,Bhattacharya20}. Mathematical difficulty of characterizing the class of LOCC operations \cite{Chitambar14(1)} motivates researchers to address the state discrimination problem with a larger class of operations, namely separable super-operator (SEP) and/or the set of operations that map positive partial transposition (PPT) states to PPT states \cite{Duan09,Yu12,Yu14,Bandyopadhyay15,Cohen15}. 

The problem of local discrimination discussed so far considers identifying the unknown state from a single copy of the system. Given many copies of the state $\ket{\psi_i}^{\otimes k}$ the probability of knowing the state $\ket{\psi_i}$ and consequently guessing the classical message `$i$' increases; $\ket{\psi}_i$ is sampled from some known ensemble $\mathcal{E}_N\equiv\{p_i,\ket{\psi_i}~|~p_i>0~\&~\sum p_i=1\}_{i=1}^N$. If there is no limitation on the number $k$, then the state can be identified exactly even without any prior knowledge from which ensemble it is sampled. This fact, known as local tomography, has been axiomatized by many authors in physical reconstruction of Hilbert space quantum theory \cite{Chiribella11,Barnum14,Niestegge20}. However, given the prior knowledge of the orthogonal pure states ensemble $\mathcal{E}_N$, only $(N-1)$ copies are sufficient for perfect local discrimination \cite{Walgate00,Bandyopadhyay11}. In this multi-copy scenario, the parties are allowed to invoke adaptive as well as non-adaptive strategies for discrimination.
\begin{definition}\label{def1}
(Local adaptive strategy) Given multiple copies of the state in an adaptive strategy each of the copies are addressed individually and the maximal possible information regarding the unknown state is extracted through LOCC. Depending on the knowledge obtained from the former copies strategies on the later copies are modified adaptively.
\end{definition}
Such adaptive strategies have already been studied extensively in quantum channel discrimination \cite{Hayashi09,Harrow10,Pirandola19} and in noise estimation \cite{Pirandola17,Cope17}. Perfect local discrimination of any orthogonal ensemble $\mathcal{E}_N$ is possible with $(N-1)$ copies under such adaptive schemes. The known examples of the nonlocal ensemble, however, do not even require $(N-1)$ copies for perfect discrimination. In fact, the authors in Ref. \cite{Bennett99} found no example where more than two copies of the unknown state are needed for perfect discrimination and thus they laid down the question: ``Are there any sets of states, entangled or not, for which some finite number (greater than $2$) of copies of the state is necessary for distinguishing the states reliably?" Even after more than two decades, to the best of our knowledge, there is no conclusive answer to this question. In the present letter we address this question and show that there indeed exist ensembles of orthogonal pure states for composite system that require more than two copies of the state for perfect local discrimination under adaptive strategies. Our explicit constructions form orthonormal bases for two-qubit Hilbert space. We, in-fact, completely characterize the two-qubit ensembles that require three copies of the state for perfect local discrimination under adaptive strategies. Interestingly, we also find ensembles that require three copies of the state for perfect discrimination under adaptive LOCC whereas two copies suffice if adaptive SEP protocols are followed. Muli-copy consideration of local state discrimination problem establishes the `more nonlocality with less entanglement' phenomenon for $(\mathbb{C}^2)^{\otimes2}$ ensembles which was earlier reported for ensembles in $(\mathbb{C}^3)^{\otimes2}$ \cite{Horodecki03}. Our study indicates an intriguing super-additivity phenomenon for the {\it locally accessible information} of multi-site quantum ensembles, and we discuss a practical application of the present constructions in multipartite secret sharing task.  

{\it Results.--} Throughout the paper we follow the standard notations used in quantum information community. For instance, $\{\ket{0},\ket{1}\}$ is the computational bases (eigenkets of Pauli $\sigma_z$ operator) of $\mathbb{C}^2$ and $\ket{ab}$ stands for the short hand notation of the bipartite state $\ket{a}_A\otimes\ket{b}_B\in\mathbb{C}^2_A\otimes\mathbb{C}^2_B$. Given single copy of the state, the authors in \cite{Walgate02} have analyzed LOCC distinguishability of orthogonal ensembles in $(\mathbb{C}^2)^{\otimes2}$. Here we will analyse their multi-copy local distinguishability. We will explore the case where Alice and Bob follow adaptive protocols for discrimination. An ensemble of three orthogonal pure states (of arbitrary composite system) is always two-copy distinguishable under adaptive LOCC \cite{Walgate00,Bandyopadhyay11}. We will therefore focus on the ensembles $\mathcal{E}_4=\{\ket{\psi_i}\}_{i=1}^4\subset(\mathbb{C}^2)^{\otimes2}$, where $\langle\psi_i|\psi_j\rangle=\delta_{ij}$. Throughout the paper we will consider that the states are uniformly sampled from the ensemble. Given multiple copies, while discriminating such a set under adaptive protocol, following three situations can arise: (a) the states can be discriminated after acting on the first copy at an early stage of the protocol, (b) protocol on the first copy results in a conclusion that the given state belongs to the group $\mathcal{G}_l\equiv\{\ket{\psi_l}\}$ or in its complement group $\mathcal{G}_l^{\mathsf{C}}:=\mathcal{E}_4\setminus\mathcal{G}_l$, for some $l\in\{1,\cdots,4\}$, and (c) protocol on the first copy results in a conclusion that the state belongs to the group $\mathcal{G}_{ij}\equiv\{\ket{\psi_i},\ket{\psi_j}\}$ for some $i,j\in\{1,\cdots,4\}$ with $i\neq j$. In the second and third cases the discrimination has to be completed from the subsequent copies. This leads us to our first result.
\begin{theorem}\label{theo1}
An orthogonal ensemble $\mathcal{E}_4\subset(\mathbb{C}^2)^{\otimes2}$ is two-copy distinguishable under adaptive LOCC protocol if and only if one of the following is satisfied:
\begin{itemize}
\item[(i)] the set is one-copy distinguishable and second copy is not at all required (trivial case); 
\item[(ii)] protocol on the first copy leads to a conclusion that the given state belongs to a group $\mathcal{G}_l$ or $\mathcal{G}_l^{\mathsf{C}}$ for some $l\in\{1,\cdots,4\}$ such that $\mathcal{G}_l^{\mathsf{C}}$ contains no more than one entangled state.
\item[(iii)] protocol on the first copy leads to a conclusion that the given state belongs to the a group $\mathcal{G}_{ij}$ for some $i,j\in\{1,\cdots,4\}$ with $i\neq j$, such that the projectors $\mathbb{P}_{ij}$ and $\mathbb{P}_{ij}^{\mathsf{C}}$ are separable, where $\mathbb{P}_{ij}:=\ket{\psi_i}\bra{\psi_i}+\ket{\psi_j}\bra{\psi_j}~\&~\mathbb{P}_{ij}^{\mathsf{C}}:=\mathbf{I}_4-\mathbb{P}_{ij}$.
\end{itemize}
\end{theorem}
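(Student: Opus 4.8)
\emph{Proof plan.} I would prove the equivalence by treating separately the three possible shapes of the first--copy step isolated in the discussion preceding the theorem, relying on three single--copy facts in $\C^2\otimes\C^2$: (F1) any two orthogonal pure states are perfectly LOCC--distinguishable \cite{Walgate00}; (F2) any orthonormal product basis, indeed any orthogonal product set, is perfectly LOCC--distinguishable \cite{Walgate02}; and (F3) three mutually orthogonal pure states are one--copy LOCC--distinguishable \emph{iff} at most one of them is entangled. For (F3) the ``if'' direction is a short case check --- if all three are product use (F2); if exactly one, say $\ket{\psi_3}$, is entangled, orthogonality forces the two product states $\ket{ab},\ket{cd}$ to have $\ket{a}\perp\ket{c}$ (otherwise their orthocomplement is a product subspace and cannot contain $\ket{\psi_3}$), so Alice measures $\{\proj{a},\proj{c}\}$ and each branch leaves Bob with $\le2$ orthogonal states --- while ``only if'' follows from the single--copy distinguishability criteria of \cite{Walgate02}, or a Walgate--type argument that resolving two orthogonal entangled states already exhausts the local freedom needed to separate off a third. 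I would also record the two structural inputs used repeatedly: every LOCC measurement effect is a separable operator; and, by Definition~\ref{def1}, an adaptive two--copy protocol performs an LOCC measurement $\{E_k\}$ on the first copy, keeps the classical outcome $k$, and then performs an outcome--dependent LOCC measurement on the fresh second copy, so that with $S_k:=\{i:\langle\psi_i|E_k|\psi_i\rangle>0\}$ the protocol is perfect iff $\{\ket{\psi_i}\}_{i\in S_k}$ is one--copy LOCC--distinguishable for every $k$.

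\emph{Sufficiency.} Case (i) is immediate. For (ii): a certify/reject--$\ket{\psi_l}$ measurement needs an outcome with effect $\propto\proj{\psi_l}$, which is an LOCC effect exactly when $\ket{\psi_l}=\ket{ab}$ is product; then the product projective measurement $\{\proj{a},\proj{a^\perp}\}_A\otimes\{\proj{b},\proj{b^\perp}\}_B$ on the first copy certifies $l$ on outcome $(a,b)$ and leaves the state in $\mathcal{G}_l^{\mathsf{C}}$ otherwise, which by hypothesis carries at most one entangled state and is therefore settled by the second copy via (F3). For (iii): a rank--two separable projector on $\C^2\otimes\C^2$ necessarily decomposes into two orthogonal rank--one product projectors, so $\mathbb{P}_{ij}=\proj{e_1f_1}+\proj{e_2f_2}$ and $\mathbb{P}_{ij}^{\mathsf{C}}=\proj{e_3f_3}+\proj{e_4f_4}$ with $\{\ket{e_mf_m}\}_{m=1}^4$ an orthonormal product basis; measuring it on the first copy by LOCC (F2) reveals whether the state lies in $\mathcal{G}_{ij}$ or in $\mathcal{G}_{ij}^{\mathsf{C}}$, and the second copy then separates the surviving pair by (F1).

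\emph{Necessity.} Conversely, given a perfect protocol with first--copy LOCC measurement $\{E_k\}$, each $E_k$ is separable and supported on $\mathsf{Span}\{\ket{\psi_i}:i\in S_k\}$, and each $\{\ket{\psi_i}\}_{i\in S_k}$ is one--copy LOCC--distinguishable. If some $S_k$ equals $\{1,2,3,4\}$, then $\mathcal{E}_4$ is one--copy distinguishable --- case (i). Otherwise the first copy genuinely refines $\mathcal{E}_4$, and the key mechanism is that the effects whose outcome concludes membership in a fixed subgroup $\mathcal{G}$ sum exactly to the projector onto $\mathsf{Span}\,\mathcal{G}$, which is thus separable. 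Coarse--graining the outcomes to the conclusions they carry, the achievable patterns reduce to: certifying a single index $l$ (forcing $\ket{\psi_l}$ product and, for the complementary branch to be finishable from one further copy, $\mathcal{G}_l^{\mathsf{C}}$ to contain at most one entangled state by (F3)) --- case (ii); or splitting $\mathcal{E}_4$ into two complementary pairs $\mathcal{G}_{ij},\mathcal{G}_{ij}^{\mathsf{C}}$ (forcing $\mathbb{P}_{ij}$ and $\mathbb{P}_{ij}^{\mathsf{C}}$ separable) --- case (iii); or a combination of these, which the same separable--projector bookkeeping collapses onto (i), (ii) or (iii).

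\emph{Main obstacle.} The delicate part is necessity --- verifying that the list (a)/(b)/(c) of first--copy possibilities is genuinely exhaustive once LOCC--implementability is imposed. Separable operators are far more plentiful than LOCC ones, so one cannot argue purely at the level of the effect algebra; instead one must decompose an arbitrary first--copy protocol into its constituent local rank--one measurements with communication and track, branch by branch, which $\ket{\psi_i}$'s remain mutually consistent, confirming that any ``exotic'' outcome pattern (overlapping pairs, a triple beside a singleton, and so on) either fails to be completable from a single further copy or carries, via the separable--projector identity, one of the structures (ii) or (iii). A secondary loose end is a quick check that retaining rather than discarding the post--measurement state of the first copy gives no advantage, which is immediate since an invertible local post--processing is reversible and a rank--one local effect freezes that party's residual register.
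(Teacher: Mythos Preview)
Your approach tracks the paper's closely: sufficiency by explicit second-copy constructions in each case, necessity by analysing the surviving sets after the first-copy LOCC measurement, and closure of the ``no elimination'' branch via the four-state product criterion of \cite{Walgate02}. The one substantive methodological difference is in case~(iii): the paper invokes the Chitambar--Duan--Hsieh criterion \cite{Chitambar14} (two orthogonal rank-two two-qubit density operators are LOCC-distinguishable \emph{iff} the projectors onto their supports are separable) as a black box for both directions, whereas you replace the sufficiency half with the structural assertion that a separable rank-two projector on $\C^2\otimes\C^2$ necessarily decomposes as a sum of two orthogonal rank-one product projectors. That assertion is correct, but it is precisely the nontrivial content of the ``if'' direction in \cite{Chitambar14} and does not follow just from the range criterion; it would need its own argument. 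On necessity, be aware that your ``key mechanism'' --- that the effects whose outcome concludes membership in a fixed subgroup $\mathcal{G}$ sum exactly to the projector onto $\mathsf{Span}\,\mathcal{G}$ --- fails once different outcomes carry overlapping surviving sets (e.g.\ $S_1=\{1,2\}$, $S_2=\{1,3\}$): the sum is then separable and supported in $\mathsf{Span}\,\mathcal{G}$ but strictly below the projector. The paper does not resolve this either; it simply adopts the pre-theorem trichotomy (a)/(b)/(c) as exhaustive and handles only the ``no state eliminated'' residue via Theorem~4 of \cite{Walgate02}. So the obstacle you flag is genuine and is shared with the paper's own argument.
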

\begin{proof}
For perfect discrimination, after the protocol on first copy either the state is discriminated straightaway [case (i)] or some states (at least one) must be eliminated [the other two cases]. In case (ii), if the protocol on the first copy leads to the conclusion that that the given set is in some group $\mathcal{G}_l$ then the discrimination is done. If it leads to a conclusion in the complement group $\mathcal{G}_l^{\mathsf{C}}$ (which will be the case with some non-zero probability) then the unknown state needs to be distinguished from the second copy. Recall that three orthogonal states in $(\mathbb{C}^2)^{\otimes2}$ can be exactly locally distinguished if and only if at least two of
those states are product states \cite{Walgate02}, and hence leading us to the assertion (ii). In case (iii), if the protocol on first copy leads to a conclusion that the unknown state belongs to either in some group $\mathcal{G}_{ij}$ or in its complement group $\mathcal{G}_{ij}^{\mathsf{C}}$ then protocol on the second copy perfectly succeed as two orthogonal states (in any dimension) can always be exactly locally distinguished \cite{Walgate00}. Suppose that there is some LOCC protocol on the first copy that leads us to this desired conclusion. Such a protocol will perfectly distinguish the density operators $\rho:=\frac{1}{2}\mathbb{P}_{ij}$ and $\rho^\perp:=\frac{1}{2}\mathbb{P}_{ij}^{\mathsf{C}}$. Recall that two rank-two orthogonal density operators $\rho,\rho^\perp\in\mathcal{D}(\mathbb{C}^2\otimes\mathbb{C}^2)$ are LOCC distinguishable if and only if the projectors onto the supports of each of the
mixed states are separable \cite{Chitambar14}, and hence leads us to the assertion (iii). 

To complete the argument, let us consider that after the protocol on first copy, none of the states is eliminated, rather the knowledge regarding the ensemble gets updated, {\it i.e.} the uniform ensemble  $\mathcal{E}_4\equiv\{1/4,\ket{\psi_i}\}$ ends up into some non-uniform ensemble $\mathcal{E}^\prime_4\equiv\{p_i,\ket{\psi_i}~|~p_i\neq0~\forall~i\}$. In such a case, the ensemble $\mathcal{E}^\prime$ must be discriminated from the second copy of the state. According to the Theorem 4 of Ref. \cite{Walgate02}, this can be done if and only if all the four states are product. This is because Theorem 4 of Ref. \cite{Walgate02} (and also the other theorems therein) is independent of the prior probability distributions. This completes the proof. 
\end{proof}
Given the above theorem, naturally the question arises which orthogonal ensembles of $(\mathbb{C}^2)^{\otimes2}$ are two-copy distinguishable under adaptive LOCC. As an immediate corollary of Theorem \ref{theo1} we first have the following result (Proof provided in the Supplemental).
\begin{corollary}
Any orthogonal ensemble of $(\mathbb{C}^2)^{\otimes2}$ containing no more than two entangled states is two-copy distinguishable under adaptive LOCC. 
\end{corollary}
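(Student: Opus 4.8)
The plan is to reduce the statement to a single structural fact about qubit pairs and then invoke part~(iii) of Theorem~\ref{theo1}. Since $\mathcal{E}_4$ contains at most two entangled states, it contains at least two product states; relabel so that $\ket{\psi_1}$ and $\ket{\psi_2}$ are product, say $\ket{\psi_1}=\ket{a_1}\ket{b_1}$ and $\ket{\psi_2}=\ket{a_2}\ket{b_2}$. I will show that both $\mathbb{P}_{12}=\ket{\psi_1}\bra{\psi_1}+\ket{\psi_2}\bra{\psi_2}$ and $\mathbb{P}_{12}^{\mathsf{C}}=\mathbf{I}_4-\mathbb{P}_{12}$ are separable; that is precisely the hypothesis of case~(iii) for the pair $\{1,2\}$ (note that both operators are genuinely rank two, so that the rank-two LOCC-distinguishability criterion used inside the proof of Theorem~\ref{theo1} applies), and Theorem~\ref{theo1} then yields two-copy distinguishability under adaptive LOCC.

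Separability of $\mathbb{P}_{12}$ is immediate: each $\ket{\psi_i}\bra{\psi_i}=\ket{a_i}\bra{a_i}\otimes\ket{b_i}\bra{b_i}$ is a product operator, so their sum is separable by definition. The content lies in $\mathbb{P}_{12}^{\mathsf{C}}$, the projector onto $V^\perp$ with $V=\Span\{\ket{\psi_1},\ket{\psi_2}\}$. Here I would use that orthogonality of two product vectors forces one of the local factors to be orthogonal: from $\langle a_1|a_2\rangle\langle b_1|b_2\rangle=0$ assume, without loss of generality, $\langle a_1|a_2\rangle=0$ (the alternative $\langle b_1|b_2\rangle=0$ being symmetric under interchange of the two parties). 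Then $\{\ket{a_1},\ket{a_2}\}$ is an orthonormal basis of $\mathbb{C}^2_A$, so the two blocks $\ket{a_1}\bra{a_1}\otimes\mathbb{C}^2_B$ and $\ket{a_2}\bra{a_2}\otimes\mathbb{C}^2_B$ split the whole space, with $\ket{\psi_1}$ lying in the first and $\ket{\psi_2}$ in the second. Hence $V$ respects this block decomposition, and so does $V^\perp$: inside the first block the orthocomplement of $\ket{a_1}\ket{b_1}$ is spanned by $\ket{a_1}\ket{b_1^\perp}$, inside the second by $\ket{a_2}\ket{b_2^\perp}$. Therefore $\mathbb{P}_{12}^{\mathsf{C}}=\ket{a_1}\bra{a_1}\otimes\ket{b_1^\perp}\bra{b_1^\perp}+\ket{a_2}\bra{a_2}\otimes\ket{b_2^\perp}\bra{b_2^\perp}$, which is manifestly separable.

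With both projectors separable, case~(iii) of Theorem~\ref{theo1} applies and the ensemble is two-copy distinguishable under adaptive LOCC; when $\mathcal{E}_4$ contains no entangled state at all one may alternatively appeal to case~(i), but the argument above already treats all ensembles with at most two entangled states uniformly. I expect the only delicate point to be the dichotomy on the local factors together with the ensuing block analysis of $V^\perp$: this is exactly the step that makes the statement hold in $(\mathbb{C}^2)^{\otimes2}$ and that breaks in higher local dimension, where the orthocomplement of a product pair need not admit a product basis, let alone have a separable support projector.
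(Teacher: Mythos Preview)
Your proof is correct, but it proceeds along a different branch of Theorem~\ref{theo1} than the paper does. The paper argues case by case: all-product bases fall under case~(i); it then observes (citing \cite{Bennett99(1)}) that an orthonormal basis of $(\mathbb{C}^2)^{\otimes2}$ cannot contain exactly one entangled state; and for bases with exactly two entangled states it invokes the elimination result of \cite{Bandyopadhyay11} to isolate one entangled state on the first copy, placing the ensemble in case~(ii). You instead handle all ensembles with at most two entangled states uniformly through case~(iii), by picking the two guaranteed product states and verifying directly that both $\mathbb{P}_{12}$ and $\mathbb{P}_{12}^{\mathsf{C}}$ are separable. Your dichotomy on the local overlaps and the resulting block decomposition of $V^\perp$ is the clean way to do this in $2\times 2$, and it makes your argument entirely self-contained: you need neither the impossibility of a single entangled basis state nor the external elimination lemma. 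The paper's route, on the other hand, ties the corollary more tightly to the Walgate--Hardy three-state criterion that underlies case~(ii), which is perhaps more in line with how the rest of the paper reasons about $1$-vs-$3$ splittings.
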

Ensembles having more than two entangled states can not be in category (i) or category (ii) of Theorem \ref{theo1}. However, they can be in category (iii). For instance, consider the orthonormal basis $\mathcal{B}^{[\theta]}$ with basis states:
\begin{align*}
\ket{\phi^+_\theta}&:=S_\theta\ket{00}+C_\theta\ket{11},~\ket{\phi^-_\theta}:=C_\theta\ket{00}-S_\theta\ket{11},\\
\ket{\psi^+_\theta}&:=S_\theta\ket{01}+C_\theta\ket{10},~\ket{\psi^-_\theta}:=C_\theta\ket{01}-S_\theta\ket{10},
\end{align*}
where $S_\theta\equiv\sin\theta~\&~C_\theta\equiv\cos\theta$ with $0\le\theta\le\pi/2$. All the states are entangled whenever $\theta\neq0,\pi/2$ and in particular $\theta=\pi/4$ corresponds to the maximally entangled basis/ Bell basis. These entangled bases are two-copy distinguishable under adaptive LOCC. The protocol goes as follows: on the first copy both Alice and Bob perform $\sigma_z$ measurement on their part of the composite system and compare their measurement results. Correlated outcomes imply that the given state belongs to the group $\mathcal{G}_{\ket{\phi}}\equiv\{\ket{\phi^+_\theta},\ket{\phi^-_\theta}\}$ whereas anti-correlated outcomes imply that it is from the group $\mathcal{G}_{\ket{\psi}}\equiv\{\ket{\psi^+_\theta},\ket{\psi^-_\theta}\}$. The result of Walgate {\it et al.} \cite{Walgate00} assures prefect local distinguishability of the states from the second copy. Let us now construct another orthonormal basis 
\begin{align*}
\rotatebox[origin=c]{0}{$\mathcal{A}_\gamma^{[\alpha,\beta]}\equiv$}
\left\{\!\begin{aligned}
\ket{a_1}&:=\ket{\phi^-_\alpha},~~\ket{a_3}:=S_\gamma\ket{\phi^+_\alpha}+C_\gamma\ket{\psi^+_\beta}\\
\ket{a_2}&:=\ket{\psi^-_\beta},~~\ket{a_4}:=C_\gamma\ket{\phi^+_\alpha}-S_\gamma\ket{\psi^+_\beta}
\end{aligned}\right\},	
\end{align*}
with $\alpha,\beta,\gamma\in[0,\pi/2]$. Whenever $\alpha,\beta\neq0,\pi/2$, the states $\ket{a_1}~\&~\ket{a_2}$ are entangled. However entanglement of the other two states demand further restrictions on the parameters. For instance, the states $\ket{a_3},\ket{a_4}\in\mathcal{A}_{\pi/4}^{[\alpha,\beta]}$ are entangled {\it iff} $\alpha\neq\beta,~\pi/2-\beta$. In the rest of the paper we will mainly analyze the case $\gamma=\pi/4$. For generic consideration of $\gamma$ see the supplementary material \cite{Supple}. We also consider the cases where all the states in $\mathcal{A}_{\pi/4}^{[\alpha,\beta]}$ are entangled. Next we report an interesting feature of the set $\mathcal{A}_{\pi/4}^{[\alpha,\beta]}$.
\begin{proposition}\label{prop1}
$\mathbb{V}[ij]:=Span\{\ket{a_i},\ket{a_j}\}$ with $i\neq j$ and $\ket{a_i},\ket{a_j}\in\mathcal{A}_{\pi/4}^{[\alpha,\beta]}$; $\alpha\neq\beta,~\pi/2-\beta$. For every choice of $i,j\in\{1,2,3,4\}$ the projector onto the two dimensional subspace $\mathbb{V}[ij]$ is entangled. 
\end{proposition}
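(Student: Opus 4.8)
The plan is to turn separability of each $\mathbb{P}_{ij}$ into a single determinant condition and then evaluate it for the six pairs. To each basis vector attach its coefficient matrix $A_i$ defined by $\ket{a_i}=\sum_{k,l}(A_i)_{kl}\,\ket{kl}$; then a vector $\lambda\ket{a_i}+\mu\ket{a_j}\in\mathbb{V}[ij]$ is a product state precisely when $\det(\lambda A_i+\mu A_j)=0$, a binary quadratic in $(\lambda:\mu)$ whose leading and trailing coefficients are $\det A_i$ and $\det A_j$. First I would record the elementary fact that a rank-two projector $\ketbra{u}{u}+\ketbra{v}{v}$ on $(\mathbb{C}^2)^{\otimes2}$, with $\{\ket u,\ket v\}$ orthonormal, is separable if and only if the plane $\Span\{\ket u,\ket v\}$ is spanned by two \emph{orthogonal} product vectors (a standard fact, obtainable from the PPT criterion together with the structure of rank-two separable states). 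In the regime $\alpha,\beta\in(0,\pi/2)$ with $\alpha\neq\beta,\pi/2-\beta$ — the one in which all four basis states are entangled — one checks $\det A_i\neq0$ for every $i$ and that the discriminant of each $\det(\lambda A_i+\mu A_j)$ is strictly positive; hence the quadratic has two distinct real finite roots $\rho_1,\rho_2$ (with $\rho=\lambda/\mu$), the product vectors are $\rho_r\ket{a_i}+\ket{a_j}$, and orthogonality of the two — using $\langle a_i|a_j\rangle=0$ and $\|a_i\|=\|a_j\|=1$ — reads $\rho_1\rho_2=-1$, i.e. (the product of roots being $\det A_j/\det A_i$) $\det A_i+\det A_j=0$. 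So it suffices to show $\det A_i+\det A_j\neq0$ for all six pairs.

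Carrying out the bookkeeping for $\gamma=\pi/4$ is short: $\det A_1=-\tfrac12\sin2\alpha$, $\det A_2=\tfrac12\sin2\beta$, and $\det A_3=\det A_4=\tfrac14(\sin2\alpha-\sin2\beta)$. The discriminants come out as $4\,C_\alpha S_\alpha C_\beta S_\beta$ for the pairs $\{1,2\}$ and $\{3,4\}$, and $\tfrac12(1-\sin2\alpha\sin2\beta)$ for the four mixed pairs, all strictly positive under the stated hypotheses. Then $\det A_1+\det A_2=\tfrac12(\sin2\beta-\sin2\alpha)$ and $\det A_3+\det A_4=\tfrac12(\sin2\alpha-\sin2\beta)$ are nonzero exactly because $\sin2\alpha\neq\sin2\beta$ when $\alpha\neq\beta,\pi/2-\beta$; and $\det A_1+\det A_3=\det A_1+\det A_4=-\tfrac14(\sin2\alpha+\sin2\beta)$ together with $\det A_2+\det A_3=\det A_2+\det A_4=\tfrac14(\sin2\alpha+\sin2\beta)$ are nonzero because $\alpha,\beta\in(0,\pi/2)$ forces $\sin2\alpha+\sin2\beta>0$. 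Hence every $\mathbb{P}_{ij}$ is entangled.

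The delicate point — and the reason I would not argue by merely counting product vectors — is the separability criterion in the first paragraph. A plane with two linearly independent product vectors need not have a separable projector: $\mathbb{V}[12]=\Span\{\ket{\phi^-_\alpha},\ket{\psi^-_\beta}\}$ always contains two such vectors, yet its projector is entangled for all $\alpha\neq\beta,\pi/2-\beta$. What decides matters is whether the two product vectors are \emph{orthogonal}, and here one must keep track of a subtlety: had the quadratic $\det(\lambda A_i+\mu A_j)$ possessed a conjugate pair of complex roots, the clean dictionary ``orthogonal $\iff\det A_i+\det A_j=0$'' would fail (for instance the plane spanned by the two Bell states $\ket{\Phi^+},\ket{\Psi^-}$ carries a separable projector although $\det A_i+\det A_j=1$ there). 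This is precisely why the positivity of the discriminants must be verified, and it is the one genuinely non-routine ingredient; the remaining potential edge cases (a product vector ``at infinity'' when some $A_i$ is singular, or a repeated root) simply do not occur in the stated regime. Finally, combined with part (iii) of Theorem~\ref{theo1}, this proposition is what forbids two-copy adaptive-LOCC distinguishability of $\mathcal{A}_{\pi/4}^{[\alpha,\beta]}$.
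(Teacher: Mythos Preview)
Your proof is correct and takes a genuinely different route from the paper's. The paper proves each $\mathbb{P}_{ij}$ entangled by direct NPT computation: it writes out the eigenvalues of $\mathbb{P}_{ij}^{\mathsf{T}_p}$ (treating the pairs $\{1,2\},\{3,4\}$ and the four mixed pairs separately, the latter only at $\gamma=\pi/4$) and verifies that one eigenvalue is negative whenever $\alpha\neq\beta,\,\pi/2-\beta$. Your argument is structural instead: you locate the product vectors in each plane as roots of the quadratic $\det(\lambda A_i+\mu A_j)$, reduce separability of the rank-two projector to \emph{orthogonality} of those two product vectors, and collapse that to the single scalar test $\det A_i+\det A_j=0$ once the discriminant is checked positive (so the roots are real). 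This avoids diagonalising any $4\times4$ matrices, makes the role of the hypotheses $\alpha\neq\beta,\,\pi/2-\beta$ completely transparent (they are exactly what forbids $\sin2\alpha=\sin2\beta$), and your careful treatment of the complex-root subtlety --- illustrated by the $\{\ket{\Phi^+},\ket{\Psi^-}\}$ counterexample where $\det A_i+\det A_j\neq0$ yet the projector is separable --- is precisely the non-routine point and is handled correctly. The paper's approach, by contrast, needs no auxiliary lemma on rank-two separability and is entirely self-contained, at the cost of heavier symbolic work and a less explicit picture of where in parameter space separability would occur.
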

See supplementary material \cite{Supple} for the proof. Coming back to local distinguishability of the set $\mathcal{A}_{\pi/4}^{[\alpha,\beta]}$, it is neither in category (i) nor in category (ii) of Theorem \ref{theo1}. Furthermore, Proposition \ref{prop1} obstructs it to be in category (iii) too. These altogether guide us to the following result. 
\begin{theorem}\label{theo2}
The set $\mathcal{A}_{\pi/4}^{[\alpha,\beta]}$, with $\alpha\neq\beta,~\pi/2-\beta$, is perfectly distinguishable under adaptive LOCC protocol if and only if three copies of the state are available. 
\end{theorem}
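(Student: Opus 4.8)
The plan is to split the equivalence into necessity (two copies do not suffice) and sufficiency (three copies do), handling the former with Theorem~\ref{theo1} together with Proposition~\ref{prop1}, and the latter with an explicit protocol. For necessity I would check that $\mathcal{A}_{\pi/4}^{[\alpha,\beta]}$ lies outside every category of Theorem~\ref{theo1}. It is not in category~(i): a complete orthonormal basis of $(\mathbb{C}^2)^{\otimes2}$ is one-copy LOCC distinguishable only if all four elements are product (Theorem~4 of \cite{Walgate02}), whereas $\ket{a_3},\ket{a_4}$ are entangled precisely under the standing hypothesis $\alpha\neq\beta,\pi/2-\beta$. It is not in category~(ii): for every $l$ the three-element complement $\mathcal{G}_l^{\mathsf{C}}$ contains at least two entangled states, since $\ket{a_3},\ket{a_4}$ are always entangled and the constraint $\alpha\neq\beta,\pi/2-\beta$ forbids $\ket{a_1}=\ket{\phi^-_\alpha}$ and $\ket{a_2}=\ket{\psi^-_\beta}$ from being product simultaneously, so at least one of $\ket{a_1},\ket{a_2}$ is also entangled and sits inside any such $\mathcal{G}_l^{\mathsf{C}}$. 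It is not in category~(iii): Proposition~\ref{prop1} states that every $\mathbb{P}_{ij}$ is entangled, hence never separable, so the separability requirement of~(iii) fails for all $i,j$. Consequently no two-copy adaptive LOCC protocol distinguishes the set, and at least three copies are required.

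For sufficiency I would exhibit a three-copy adaptive protocol whose first round reduces the problem to a three-state one. On the first copy Alice and Bob each measure $\sigma_z$ and compare outcomes: a correlated outcome ($00$ or $11$) is impossible for $\ket{a_2}$ and occurs with positive probability for each of $\ket{a_1},\ket{a_3},\ket{a_4}$, while an anti-correlated outcome ($01$ or $10$) is impossible for $\ket{a_1}$ and occurs with positive probability for each of $\ket{a_2},\ket{a_3},\ket{a_4}$. Since only the first copy has been touched, in either branch the remaining two copies still carry $\ket{a_i}^{\otimes2}$ with $i$ ranging over an orthogonal set of three states, with an updated, non-uniform but strictly positive prior. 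The two-copy distinguishability of any ensemble of three orthogonal pure states under adaptive LOCC \cite{Walgate00,Bandyopadhyay11}, which is prior-independent, then completes the discrimination using copies two and three. Hence three copies suffice, and together with necessity this proves the theorem.

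The routine ingredient is the $\sigma_z\otimes\sigma_z$ outcome-probability bookkeeping; the step that needs care is the necessity direction, specifically verifying that categories~(i)--(iii) of Theorem~\ref{theo1} are all excluded under the \emph{sole} hypothesis $\alpha\neq\beta,\pi/2-\beta$ (rather than the stronger assumption that all four states are entangled), which is why the observation that $\ket{a_1}$ and $\ket{a_2}$ cannot both be product is needed to kill category~(ii). Everything else is assembled from Theorem~\ref{theo1}, Proposition~\ref{prop1}, and the known two-copy result for three orthogonal states.
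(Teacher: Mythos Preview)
Your proposal is correct and follows the same route as the paper: necessity by ruling out all three categories of Theorem~\ref{theo1} (with Proposition~\ref{prop1} handling~(iii)), and sufficiency from the general $(N-1)$-copy result. The paper's argument is terser---it simply asserts the set ``is neither in category~(i) nor in category~(ii)'' and invokes Proposition~\ref{prop1} for~(iii), while for sufficiency it relies on the generic three-copy bound rather than an explicit protocol---but the logical skeleton is identical.

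Two remarks worth recording. First, your explicit $\sigma_z\otimes\sigma_z$ first-round protocol is a nice concretisation; the paper does not spell one out for Theorem~\ref{theo2} specifically. Second, your care in excluding category~(ii) under the \emph{sole} hypothesis $\alpha\neq\beta,\pi/2-\beta$ is well placed: the paper tacitly restricts to the case where all four $\ket{a_i}$ are entangled (``We also consider the cases where all the states in $\mathcal{A}_{\pi/4}^{[\alpha,\beta]}$ are entangled''), whereas the stated hypothesis of Theorem~\ref{theo2} still allows one of $\ket{a_1},\ket{a_2}$ to be product (e.g.\ $\alpha=0$). Your observation that they cannot \emph{both} be product under the constraint---so the basis always has at least three entangled members and every $\mathcal{G}_l^{\mathsf{C}}$ contains at least two---closes this gap cleanly and in fact dovetails with Theorem~\ref{theo3}.
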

Under the adaptive local discrimination scheme we therefore have a conclusive answer to the Bennett {\it et al.}' question \cite{Bennett99} discussed in introduction -- there indeed exists ensemble of states that requires more than two copies of the state for perfect discrimination under adaptive local scheme. Under such schemes the set $\mathcal{A}_{\pi/4}^{[\alpha,\beta]}$ can be considered as the strongest nonlocal ensemble (of orthogonal pure states) in $(\mathbb{C}^2)^{\otimes2}$. Naturally the question arises whether such a nonlocal ensemble requires all of its states to be entangled. Our next theorem (proof provided in supplementary material \cite{Supple}) answers this question. 
\begin{theorem}\label{theo3}
Any orthonormal basis of $(\mathbb{C}^2)^{\otimes2}$ containing exactly three entangled states requires three copies of the state for perfect discrimination under adaptive LOCC. 
\end{theorem}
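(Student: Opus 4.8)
The plan is to apply Theorem~\ref{theo1}: I would show that an orthonormal basis $\mathcal{E}_4=\{\ket{\psi_i}\}_{i=1}^4$ of $(\mathbb{C}^2)^{\otimes2}$ with exactly one product state — say $\ket{\psi_4}=\ket{a}\ket{b}$, with $\ket{\psi_1},\ket{\psi_2},\ket{\psi_3}$ entangled — satisfies none of the conditions (i)--(iii) of Theorem~\ref{theo1}, hence is not two-copy distinguishable under adaptive LOCC; since $N-1=3$ copies always suffice \cite{Walgate00,Bandyopadhyay11}, exactly three copies are then needed. Ruling out (i): if the basis were one-copy LOCC distinguishable then every three-element sub-ensemble would be LOCC distinguishable as well, and by the characterization of Walgate \emph{et al.} \cite{Walgate02} a triple of orthogonal states in $(\mathbb{C}^2)^{\otimes2}$ is LOCC distinguishable only if at least two of its members are product; but $\ket{\psi_4}$ is the unique product state of $\mathcal{E}_4$, so no triple contains two product states — contradiction. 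Ruling out (ii): for every $l$, the complement group $\mathcal{G}_l^{\mathsf{C}}$ has three elements, at least two of which are entangled (again because $\ket{\psi_4}$ is the only product state), so $\mathcal{G}_l^{\mathsf{C}}$ never contains ``no more than one entangled state''.

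The substantive part is ruling out (iii). I would first isolate the elementary lemma that an orthonormal basis of $(\mathbb{C}^2)^{\otimes2}$ cannot contain \emph{exactly} three product states (it contains $0$, $1$, $2$, or $4$): for two orthonormal product states $\ket{e_1f_1}$ and $\ket{e_2f_2}$, orthogonality forces $\ket{e_1}\perp\ket{e_2}$ or $\ket{f_1}\perp\ket{f_2}$, and a short case analysis on which factor a third mutually orthogonal product state is orthogonal on then pins the remaining basis vector to be product too. Granting this, suppose $\mathbb{P}_{ij}$ were separable for some pair $\{i,j\}\subseteq\{1,2,3\}$ of (entangled) states. The projector onto a two-dimensional subspace of $(\mathbb{C}^2)^{\otimes2}$ is separable only if that subspace contains two \emph{orthogonal} product states \cite{Chitambar14}; replacing $\ket{\psi_i},\ket{\psi_j}$ by such a pair $\ket{e_1f_1},\ket{e_2f_2}$ (which span the same subspace, so the rest of the basis stays orthonormal) would produce a basis $\{\ket{e_1f_1},\ket{e_2f_2},\ket{\psi_k},\ket{a}\ket{b}\}$ with three product states, forcing $\ket{\psi_k}$ to be product and contradicting that $\ket{\psi_1},\ket{\psi_2},\ket{\psi_3}$ are all entangled. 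Hence $\mathbb{P}_{ij}$ is entangled for every pair contained in $\{1,2,3\}$. Finally, each of the three partitions of $\{1,2,3,4\}$ into two pairs puts the index $4$ into one pair, so the other pair lies inside $\{1,2,3\}$; for that pair $\mathbb{P}_{ij}$ is entangled, hence condition (iii) fails for every partition. Combining the three exclusions with Theorem~\ref{theo1} then finishes the argument.

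I expect the main obstacle to be the clean statement and use of the two facts about product states: the ``no exactly three product states'' lemma, which needs a careful (if elementary) case split over the two ways two product vectors can be orthogonal, and the precise equivalence between a two-dimensional subspace admitting a separable projector and its containing two \emph{orthogonal} product states — in particular the point that a subspace spanned by two \emph{non}-orthogonal product vectors has an entangled (NPT) projector, which already happens for $\mathrm{Span}\{\ket{00},\ket{{+}{+}}\}$. Everything else is bookkeeping with Theorem~\ref{theo1} and the known $(N-1)$-copy sufficiency.
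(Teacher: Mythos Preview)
Your argument is correct, and for conditions (i) and (ii) it matches the paper's proof essentially verbatim. For condition (iii), however, you take a genuinely different route. The paper looks at the pair \emph{containing} the product state $\ket{\psi_4}$: that pair is $\{\text{product},\text{entangled}\}$, and a single citation (Theorem~2 of \cite{Horodecki03(1)}) gives that the projector onto such a span is entangled, so (iii) fails immediately. You instead look at the \emph{complementary} pair $\{i,j\}\subset\{1,2,3\}$ of two entangled states, invoke the equivalence ``separable rank-two projector $\Leftrightarrow$ support spanned by two \emph{orthogonal} product vectors'', and then feed the hypothetical product pair into your ``no orthonormal basis with exactly three product states'' lemma to reach a contradiction. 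What the paper's route buys is brevity---one line and one reference. What your route buys is self-containment: you avoid the external citation at the cost of proving two elementary facts (the three-product-state obstruction and the non-orthogonal--product-span NPT claim, e.g.\ your $\mathrm{Span}\{\ket{00},\ket{{+}{+}}\}$ example). Both arguments ultimately rest on the same structural fact about two-qubit rank-two projectors, just approached from opposite sides of the partition.
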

Theorem \ref{theo3} is quite interesting in a sense. It tells that any basis containing three entangled states is two-copy indistinguishable under adaptive LOCC, whereas some bases with all entangled states (e.g. Bell basis) are two-copy distinguishable under local adaptive scheme. This mimics the phenomenon of {\it more nonlocality with less entanglement} \cite{Horodecki03} as a set with less number of entangled states turns out to be harder to discriminate. Importantly, the ensemble in Ref. \cite{Horodecki03} exhibiting this feature lives in $(\mathbb{C}^3)^{\otimes2}$ and with single-copy consideration such a phenomenon is not possible in $(\mathbb{C}^2)^{\otimes2}$. Multi-copy adaptive discrimination makes this phenomenon possible for two-qubit ensembles and in-fact introduces a hierarchical strength in the {\it more nonlocality with less entanglement} phenomenon.       

Next we consider multi-copy discrimination of nonlocal ensembles under the adaptive SEP. Recall that this set of operations strictly includes the set of LOCC operations \cite{Chitambar14(1)}, even for two-qubit Hilbert space \cite{Duan09}. Our next result, however, establishes that given two copies of the state even this larger class of operations fails to erase the nonlocal feature of some ensembles. 
\begin{theorem}\label{theo4}
There exist suitable choices of the parameters $\alpha,\beta$, and $\gamma$ such that under  adaptive SEP protocol perfect discrimination of the set $\mathcal{A}_{\gamma}^{[\alpha,\beta]}$ requires three copies of the state.  
\end{theorem}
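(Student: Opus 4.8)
\medskip
\noindent\emph{Proof strategy for Theorem~\ref{theo4}.}
The plan is to lift the reduction underlying Theorem~\ref{theo1} to the adaptive‑SEP setting and then isolate a parameter region in which that relaxed criterion still forces three copies. Two elementary facts drive everything. (a)~Every effect of an SEP measurement is a separable operator; hence it cannot be supported on a subspace lacking product vectors, and an effect supported on a two‑dimensional $\mathbb{V}[ij]$ lies in the cone generated by the (one or two) product projectors it contains. (b)~A POVM that perfectly discriminates orthonormal pure states $\{\ket{u_k}\}$ is forced: $\Tr(E_k\ket{u_m}\bra{u_m})=\delta_{km}$ with $0\le E_k\le\mathbf{I}_4$ gives $E_k\ket{u_k}=\ket{u_k}$ and $E_k\ket{u_m}=0$ for $m\neq k$, irrespective of the priors. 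Fix $\alpha,\beta,\gamma$ so that all four $\ket{a_i}$ are entangled and all $\mathbb{V}[ij]$ are entangled (an open condition extending Proposition~\ref{prop1}). For a first‑copy SEP measurement $\{M_r\}$ set $T_r=\{k:\Tr(M_r\ket{a_k}\bra{a_k})>0\}$; by (a) a nonzero $M_r$ is supported on $\Span\{\ket{a_k}:k\in T_r\}$, so $|T_r|\neq1$ (each $\ket{a_k}\bra{a_k}$ being entangled), and a branch with $|T_r|=4$ cannot be completed from the second copy, as by (b) that would need the non‑separable POVM $\{\ket{a_k}\bra{a_k}\}$. Hence two‑copy adaptive SEP works iff $\mathbf{I}_4$ admits a separable decomposition $\sum_rM_r$ in which every $|T_r|=3$ outcome retains a triple $\{\ket{a_k}\}_{k\neq l}$ that is single‑copy SEP‑distinguishable (outcomes with $|T_r|=2$ are harmless, two orthogonal states being LOCC‑distinguishable from the second copy); the admissible $\{M_r\}$ fall into finitely many combinatorial types according to which states survive at each outcome.

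By (b) again, the triple $\{\ket{a_k}\}_{k\neq l}$ is single‑copy SEP‑distinguishable iff there is a probability vector $(e_k)_{k\neq l}$ with $E_k:=\ket{a_k}\bra{a_k}+e_k\ket{a_l}\bra{a_l}$ having positive partial transpose $E_k^{T_B}\succeq0$ for each $k$ (PPT $=$ separable on $\mathbb{C}^2\!\otimes\!\mathbb{C}^2$). The key input is that $\ket{a_1}=\ket{\phi^-_\alpha}$ and $\ket{a_2}=\ket{\psi^-_\beta}$ lie in complementary coordinate blocks, so for the triples deleting $\ket{a_1}$ or $\ket{a_2}$ the relevant $E_k^{T_B}$ is block‑diagonal and its two $2\times2$ blocks give the opposite inequalities $e\,S_\alpha C_\alpha\ge S_\beta C_\beta$ and $S_\beta C_\beta\ge e\,S_\alpha C_\alpha$, so $e$ is pinned to $\sin 2\beta/\sin 2\alpha$, respectively $\sin 2\alpha/\sin 2\beta$. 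Choosing $\sin 2\alpha$ slightly larger than $\sin 2\beta$ (hence automatically $\alpha\neq\beta,\pi/2-\beta$) makes the second pinned value exceed $1$, killing the delete‑$\ket{a_2}$ triple outright; and the first pinned value, combined with the strictly positive lower bounds on $e_3,e_4$ forced by the negative eigendirections of $\ket{a_3}\bra{a_3}^{T_B}$ and $\ket{a_4}\bra{a_4}^{T_B}$, overshoots $\sum_ke_k=1$ once $1-\sin 2\beta/\sin 2\alpha$ is small, so the delete‑$\ket{a_1}$ triple fails too.

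It remains to defeat the measurements whose $|T_r|=3$ outcomes delete $\ket{a_3}$ or $\ket{a_4}$, and those built entirely from $|T_r|=2$ effects. For the former the removed vector has full support and one evaluates the partial‑transpose eigenvalues directly: the claim is that for $\gamma$ in a suitable open interval bounded away from $0,\pi/4,\pi/2$ the PPT constraints again produce positive lower bounds on the $e_k$ with sum exceeding $1$ (that $\gamma\neq\pi/4$ is essential is consistent with the gap between adaptive SEP and LOCC noted in the text). For the latter, each assignment of pairs to outcomes forces, via fact (a) and $\sum_rM_r=\mathbf{I}_4$, a proper algebraic relation among $\alpha,\beta,\gamma$, so a generic choice excludes all of them. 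Therefore, for $(\alpha,\beta,\gamma)$ in a nonempty open region, no two‑copy adaptive SEP protocol perfectly discriminates $\mathcal{A}_\gamma^{[\alpha,\beta]}$, whereas three copies suffice as before (reduce to a group of two on the first copy, complete it on the second via Walgate \emph{et al.}~\cite{Walgate00}). The main obstacle is exactly this last step---the partial‑transpose bookkeeping for the full‑support states $\ket{a_3},\ket{a_4}$, and verifying that the parameter window on which every branch fails simultaneously (all $\ket{a_i}$ entangled, all $\mathbb{V}[ij]$ entangled, all four triples non‑distinguishable, no $|T_r|=2$‑only resolution of $\mathbf{I}_4$) is genuinely nonempty.
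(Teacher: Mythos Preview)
Your reduction is sound and mirrors the paper's: after the first copy, any useful outcome must leave one, two, or three survivors, and the case $|T_r|=4$ is impossible since an entangled basis cannot be SEP-discriminated in one shot. The two substantive differences are in how the $|T_r|=3$ and all-$|T_r|=2$ branches are closed, and it is precisely here that your argument has real gaps.

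For the three-state branch you work directly with PPT constraints on the effects $E_k=\ket{a_k}\bra{a_k}+e_k\ket{a_l}\bra{a_l}$. Your block-diagonal computation pinning $e_2=\sin 2\beta/\sin 2\alpha$ (and $e_1$ symmetrically) is correct, but the subsequent claim that the delete-$\ket{a_1}$ triple ``fails too'' once $1-\sin 2\beta/\sin 2\alpha$ is small is not justified: you never produce the lower bounds on $e_3,e_4$, and in fact for $\tan^2\gamma$ in the interval $[\sin 2\beta/\sin 2\alpha,\ \sin 2\alpha/\sin 2\beta]$ the delete-$\ket{a_1}$ triple \emph{is} SEP-distinguishable, so your argument cannot succeed without an extra constraint on $\gamma$ that you only impose later. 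The delete-$\ket{a_3}$ and delete-$\ket{a_4}$ cases you leave open entirely. The paper avoids all of this bookkeeping by invoking the Duan--Feng--Xin--Ying criterion \cite{Duan09}: three orthogonal two-qubit states $\{\ket{\psi_k}\}$ with complement $\ket{\phi}$ are SEP-distinguishable iff each $\psi_k\phi^{-1}$ has anti-parallel eigenvalues \emph{and} $\sum_k C(\psi_k)=C(\phi)$. Since the concurrences of the $\ket{a_i}$ are explicit functions of $(\alpha,\beta,\gamma)$ [Eq.~\eqref{eq4}], the equality $\sum_{k\neq l}C(a_k)=C(a_l)$ reduces, in each of the four cases $l=1,2,3,4$, to a single closed-form relation that is violated throughout the open regions $\mathcal{R}_{III}$ and $\mathcal{R}_{IV}$. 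This settles all four triples at once with no eigenvalue chasing.

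For the all-$|T_r|\le 2$ branch your genericity sketch (``a proper algebraic relation among $\alpha,\beta,\gamma$'') is not a proof. The paper instead appeals to Chitambar--Duan--Hsieh \cite{Chitambar14}: two orthogonal rank-two two-qubit states are SEP-distinguishable iff the projectors onto their supports are separable; hence a first-copy SEP protocol achieving a $2$--$2$ grouping would force $\mathbb{P}_{ij}$ and $\mathbb{P}_{ij}^{\mathsf{C}}$ to be separable, which Proposition~\ref{prop1} (and its extension to general $\gamma$) forbids. That, together with the concurrence obstruction for the $1$--$3$ groupings, yields a concrete nonempty region---for instance $\gamma=\pi/6$ with $\alpha,\beta\in(0,\pi/2)$ suitably chosen---rather than an existential open set obtained by intersecting several unverified Zariski-open conditions.
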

For instance, $\alpha,\beta\neq0,\pi/2$ and $\gamma=\pi/6$ constitute such choices. Generic analysis is referred to the supplementary material \cite{Supple}. Similar to Theorem \ref{theo1}, any protocol on the first copy that just updates the prior probability distribution without eliminating state(s) does not work good in this case too. Theorem \ref{theo4} is also important from another perspective. Mathematical characterization of local operations is extremely hard in general. While implementation of such an operation may be done by some finite round of LOCC protocol, its implementation may also demand infinite round of protocol. One can also define topological closure of different such sets. However all these different classes of local operations are strictly included within the set of SEP protocol \cite{Chitambar14(1)}. Therefore the ensembles in Theorem \ref{theo4} remain two copy indistinguishable under adaptive protocol even under infinite round of LOCC. Arguably, these are the strongest two-qubit nonlocal ensembles under any adaptive discrimination scheme - LOCC and/or SEP. Considering general  $\mathcal{A}_{\gamma}^{[\alpha,\beta]}$ we, indeed, find ensembles that are perfectly two-copy distinguishable under adaptive SEP, whereas adaptive LOCC requires three copies for perfect discrimination (ensembles in Theorem \ref{theo2} are such examples).
\begin{figure}[t!]
\includegraphics[scale=0.5]{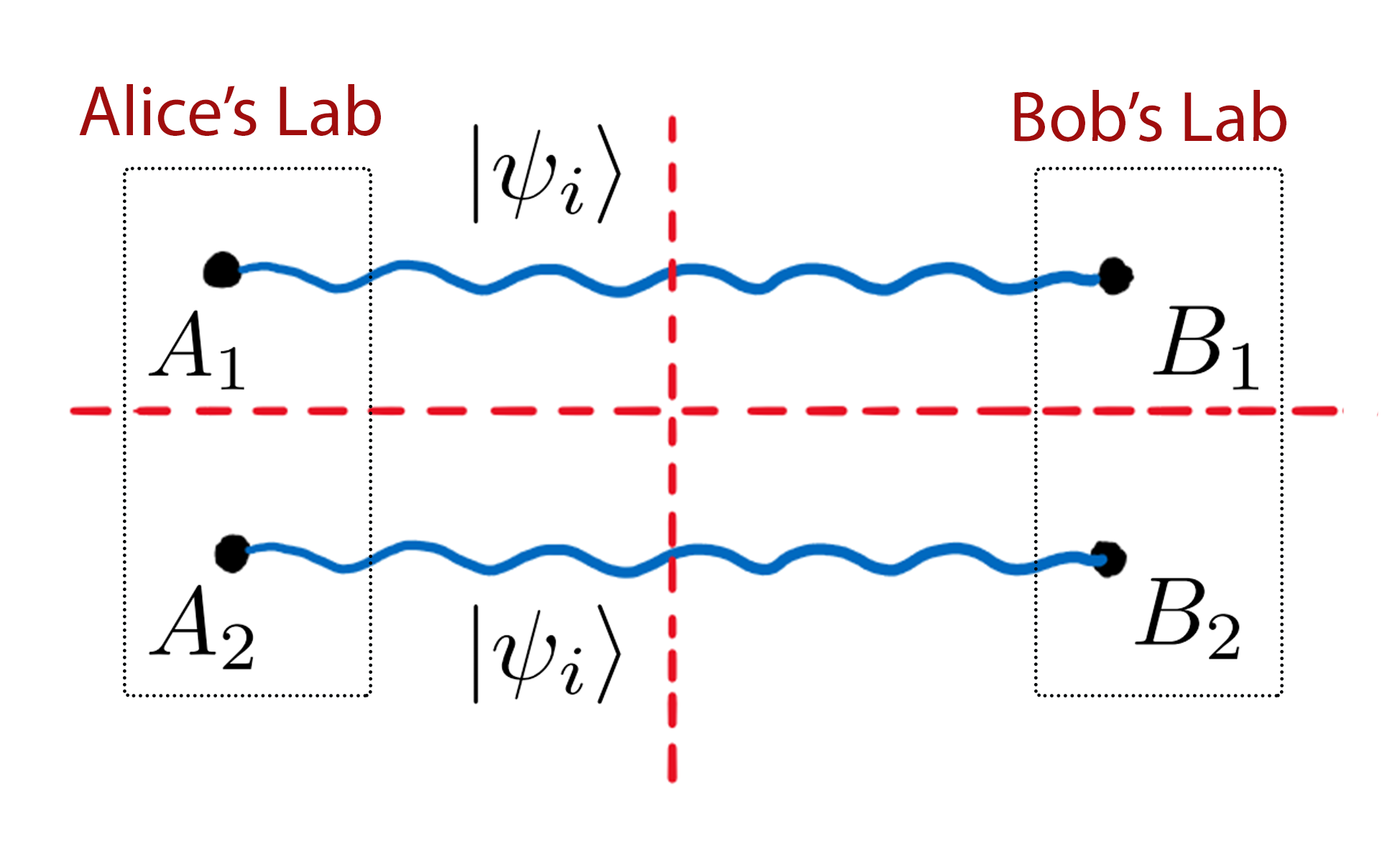}
\caption{(Color online) Alice and Bob share multiple copies of an unknown state $|\psi_i\rangle$ belonging to a known ensemble $\mathcal{E}_N$. Availability of additional entanglement resource across the vertical (party-cut) and/or horizontal (copy-cut) dotted lines give rise to different sets of allowed operations performed by Alice and Bob- a) no additional entanglement across the vertical and horizontal lines gives adaptive LOCC, b) adaptive SEP (more generally entangling operation) requires additional entanglement across vertical line but no entanglement across horizontal line, c) additional entanglement across horizontal line but no entanglement across vertical line gives non-adaptive LOCC, d) non-adaptive SEP requires entanglement across both vertical and horizontal lines.}\label{fig}
\end{figure}

We are therefore left with the only possibility whether the aforesaid ensembles are two-copy distinguishable under a non-adaptive local protocol. In such a protocol Alice (Bob) addresses both the systems at her (his) end simultaneously to perform some global measurement and communicates the outcomes with the other party. Such measurement includes SEP as well as entangled basis measurements. Important to note that, existence of an ensemble that is distinguishable under multi-copy local non-adaptive scheme but indistinguishable under adaptive local scheme will establish an intriguing super-additivity behaviour for the locally accessible information of multi-site quantum ensembles \cite{Badziag03}. Our constructions are therefore quite interesting -- either they will establish two-copy indistinguishability (under generic protocol) or it will demonstrate super-additivity of locally accessible information. In this regard, super-additivity phenomena of classical capacity of a noisy quantum channel is worth mentioning \cite{Fuchs97,King01,Schumacher01,King02}. There also entangled decoding on multiple copies of noisy channels turns out to be more efficient than its multiple single-copy use. 

An ensemble indistinguishable under local non-adaptive scheme will also be indistinguishable under adaptive LOCC. However, such an ensemble may be perfectly distinguished with adaptive SEP. Note that, physical realization of both the local non-adaptive scheme as well as adaptive SEP scheme require entanglement. While in the first case entanglement across the cut between different copies, in the later case entanglement is consumed across the cut between parties (see Fig.\ref{fig}). However, at present we have no intuition for construction of such ensembles and leave this question open for further research. In the following we rather focus on possible application of the present constructions.          

{\it Multiparty secret sharing.--} Secret sharing is an important cryptographic primitive \cite{Shamir79}. In $(k,m)$ secret sharing scheme an administrator wishes to distribute $k$-bit classical message among $m$ spatially separated parties in such a way that perfect revelation of the message requires (classical) collaboration among all the $m$ parties. In quantum scenario, administrator encodes $k$-bit massage in some $m$-partite state, {\it i.e.} $i\mapsto\rho_i\in\mathcal{D}(\mathcal{H}^{\otimes m})$ such that $\{\rho_i\}_{i=1}^{2^k}$ is a mutually orthogonal set of states \cite{Hillery99}. Our construction turns out to be efficient for a $(2,6)$ pure quantum protocol (encoded states are pure). The administrator encodes her $2$-bit message into an ensemble $\mathcal{E}_{4}\equiv\{\ket{\psi_i}\}_{i=1}^4$ which is indistinguishable under two-copy adaptive LOCC, prepares three copies of the state and distributes those among the six separated parties. In short, $i\mapsto\ket{\psi_i}\mapsto\ket{\psi_i}_{A_1B_1}\otimes\ket{\psi_i}_{A_2B_2}\otimes\ket{\psi_i}_{A_3B_3}$. No four parties can exactly decode the message through classical collaboration among them \cite{Self}.     

In a stronger variant of the aforesaid task even the classical collaborations among the parties are insufficient for perfect revelation of the message -- the parties need to come together to decode the information. It is also possible to come up with such a $(1,2)$ protocol from our construction. Recall that the result of Walgate {\it et al.} \cite{Walgate02} prohibits any such $(1,m)$ pure quantum protocol. So, the administrator choose some pair of index $i,j$ as in Proposition \ref{prop1} and then encodes her massage into the density operators $\sigma(\lambda):=\lambda\ket{a_i}\bra{a_i}+(1-\lambda)\ket{a_j}\bra{a_j}$ supported on the subspace $\mathbb{V}[ij]$ and its orthogonal pair $\sigma^\perp(\mu)$ supported on the complementary subspace. Whenever $\lambda,\mu\in(0,1)$, the perfect revelation of the massage demands both the parties to come together to apply a global discrimination measurement. At this point it may be interesting to find the values of $\lambda~\&~\mu$ that will ensure least amount of information accessible by LOCC. 

{\it Discussion.--} We have discoursed the problem of local state discrimination when multiple copies of the unknown state are available. Our study is a major advancement in this direction as it establishes that there exist ensembles of pure composite orthogonal states that are not two copy distinguishable under adaptive LOCC protocol. Under multi-copy adaptive scheme we have completely characterize the nonlocal behavior of the orthogonal ensembles of two-qubit Hilbert space. Such two-copy indistinguishable sets find useful applications in multi-party secret sharing tasks.    

Our study also raises a number of interesting questions. A conclusive answer to the possible super-additivity behavior for {\it locally accessible information} is worth deserving. A non-adaptive protocol is still missing for the example of ensembles where adaptive SEP protocol extract more classical information than local. Study of multi-copy state discrimination for higher dimensional and multipartite Hilbert spaces might reveal several other interesting features. One can consider more exotic adaptive protocols that allow to interact back and forth with the different copies by means of non-projective measurements. We believe that even under this more exotic protocol our two-copy local indistinguishable ensembles will remain indistinguishable. Formal proof of this assertion demands further research. 

\begin{acknowledgements}
MB acknowledges research grant of INSPIRE-faculty fellowship from the Department of Science and Technology, Government of India. MA acknowledges support from the CSIR
project 09/093(0170)/2016- EMR-I. SSB is supported by the National Natural Science Foundation of China through grant 11675136, the Foundational Questions Institute through grant FQXiRFP3-1325, the Hong Kong Research Grant Council through grant 17300918, and the ID\#61466 grant from the John Templeton Foundation, as part of the ``The Quantum Information Structure of Spacetime (QISS)" Project (\textcolor{blue}{qiss.fr}). The opinions expressed in this publication are those of the author(s) and do not necessarily reflect the views of the John Templeton Foundation.
\end{acknowledgements}

\onecolumngrid
\begin{center}
{\large Supplemental Material}
\end{center}
\section{Proof of Corollary 1}
\begin{proof}
Ensembles containing no entangled state are in category (i) of Theorem 1. There can not be any orthogonal basis in $(\mathbb{C}^2)^{\otimes2}$ that contain only one entangled state \cite{Bennett99(1)}. For the ensembles containing two entangled states the elimination round at the first copy level has to be done in a way so that one entangled states gets eliminated. Note that this can always be done (see {\it Theorem 1} of Ref. \cite{Bandyopadhyay11}) which consequently leaves these ensembles in category (ii) of Theorem 1.  
\end{proof}
\section{Analysis of the general case: $\mathcal{A}_{\gamma}^{[\alpha,\beta]}$}
Recall the set of four orthogonal states $\mathcal{A}_{\gamma}^{[\alpha,\beta]}$ as introduced in the main text. Among them the states $\ket{a_{1}}\equiv\ket{\phi^{-}(\alpha)}$ and $\ket{a_{2}}\equiv\ket{\psi^{-}(\beta)}$ are entangled $\forall~\alpha,\beta\in(0,\frac{\pi}{2})$. Consider the state $\ket{a_{3}}$ given by,
\begin{eqnarray}
\nonumber&\ket{a_{3}}&=S_\gamma\ket{\phi^{+}_\alpha}+C_\gamma \ket{\psi^{+}_\beta}\\\nonumber
&&=\sin\gamma\sin\alpha\ket{00}+\sin\gamma\cos\alpha\ket{11}+\cos\gamma\sin\beta\ket{01}+\cos\gamma\cos\beta\ket{10}\\\nonumber
&&=\ket{0}\otimes(\sin\gamma\sin\alpha\ket{0}+\cos\gamma\sin\beta\ket{1})+\ket{1}\otimes(\sin\gamma\cos\alpha\ket{1}+\cos\gamma\cos\beta\ket{0})\\\label{eq1}
&&=\mathcal{N}(\eta_{0})\ket{0\eta_{0}}+\mathcal{N}(\eta_{1})\ket{1\eta_{1}},
\end{eqnarray}
where, $\mathcal{N}(\eta_{0}):=\sqrt{\sin^{2}\gamma \sin^{2}\alpha + \cos^{2}\gamma \sin^{2}\beta},~\mathcal{N}(\eta_{1}):=\sqrt{\sin^{2}\gamma \cos^{2}\alpha + \cos^{2}\gamma \cos^{2}\beta}$ and $\ket{\eta_{0}},~\ket{\eta_{1}}$ are the normalized states. The state in Eq.\eqref{eq1} is separable, \textit{if and only if} $\ket{\eta_{0}}\equiv\ket{\eta_{1}}$ up to a global phase, which implies
\begin{eqnarray}
\nonumber\frac{\sin\gamma \sin\alpha}{\sqrt{\sin^{2}\gamma \sin^{2}\alpha + \cos^{2}\gamma \sin^{2}\beta}}&=&\frac{\cos\gamma \cos\beta}{\sqrt{\sin^{2}\gamma \cos^{2}\alpha + \cos^{2}\gamma \cos^{2}\beta}},\\\label{eq2}
\implies\tan^{2}\gamma&=&\frac{\sin 2\beta}{\sin 2\alpha}.
\end{eqnarray}
Similarly, separability of $\ket{a_{4}}=C_\gamma\ket{\phi^{+}_\alpha}-S_\gamma \ket{\psi^{+}_\beta}$ yields,
\begin{equation}\label{eq3}
\tan^{2}\gamma=\frac{\sin 2\alpha}{\sin 2\beta}.
\end{equation}
Notably, for $\gamma=\frac{\pi}{4}$ the conditions $\alpha=\beta$ or, $\alpha+\beta=\frac{\pi}{2}$ makes both of $\ket{a_{3}}$ and $\ket{a_{4}}$ separable. For generic $\gamma\in[0,\frac{\pi}{2}]$ and $\alpha, \beta\in (0,\frac{\pi}{2})$, separability of the states $\ket{a_3}~\&~\ket{a_4}$ is depicted in the Fig.\ref{fig1}.
\begin{figure}[h!]
\includegraphics[scale=0.4]{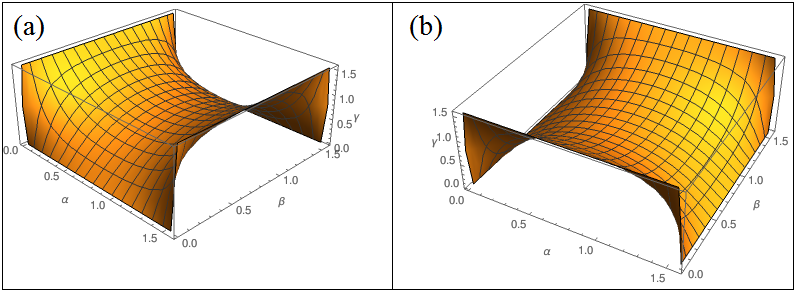}
\caption{(Color online) Parametric plot for $\{\alpha,\beta,\gamma\}$. (a) Every tuple of $(\alpha,\beta,\gamma)$ lying on the curved surface makes $\ket{a_{3}}$ separable. (b) Every tuple of $(\alpha,\beta,\gamma)$ lying on the curved surface makes $\ket{a_{4}}$ separable.}\label{fig1}
\end{figure}

Further in this section, we will calculate concurrence \cite{Wootters1998} of the states $\ket{a_{i}}\in\mathcal{A}_{\gamma}^{[\alpha,\beta]}$. Any two-qubit pure state $\ket{\psi}$ can be written as $\ket{\psi}=(\mathbf{I}\otimes\Psi)\ket{\phi^{+}}$ where $\ket{\phi^{+}}:=\frac{1}{\sqrt{2}}(\ket{00}+\ket{11})$ and $\Psi$ is a $2\times2$ matrix. The concurrence for $\ket{\psi}$ is then given by $C(\psi)=|\text{det}(\Psi)|$. Representing the states as $\ket{a_i}=(\mathbf{I}\otimes\mathbf{a_i})\ket{\phi^{+}}$ a simple numerical exercise yields,
\begin{equation}\label{eq5}
\begin{aligned}
&~~~~~~~~~\mathbf{a_1}:=
\begin{pmatrix}
\sqrt{2}\cos\alpha & 0\\
0 & -\sqrt{2}\sin\alpha
\end{pmatrix},~~~~~~~~~~~~~~~~~~~
\mathbf{a_2}:=
\begin{pmatrix}
0 & -\sqrt{2}\sin\beta \\
\sqrt{2}\cos\beta & 0
\end{pmatrix},\\\\
&\mathbf{a_3}:=
\begin{pmatrix}
\sqrt{2}\sin\gamma\sin\alpha & \sqrt{2}\cos\gamma\cos\beta\\
\sqrt{2}\cos\gamma\sin\beta & \sqrt{2}\sin\gamma\cos\alpha
\end{pmatrix},~~
\mathbf{a_4}:=
\begin{pmatrix}
\sqrt{2}\cos\gamma\sin\alpha & -\sqrt{2}\sin\gamma\sin\beta\\
-\sqrt{2}\sin\gamma\cos\beta & \sqrt{2}\cos\gamma\cos\alpha
\end{pmatrix}.
\end{aligned}
\end{equation}
Thus we have,
\begin{equation}\label{eq4}
\begin{aligned}
&C(a_{1})=|\det(\mathbf{a_1})|=\sin 2\alpha,\\
&C(a_{2})=|\det(\mathbf{a_2})|=\sin 2\beta,\\
&C(a_{3})=|\det(\mathbf{a_3})|=|\cos^{2}\gamma \sin 2\beta-\sin^{2}\gamma \sin 2\alpha|,\\
&C(a_{4})=|\det(\mathbf{a_4})|=|\cos^{2}\gamma \sin 2\alpha-\sin^{2}\gamma \sin 2\beta|.
\end{aligned}
\end{equation}
Evidently, for $\alpha,\beta\in(0,\frac{\pi}{2})$ the concurrences $C(a_{1})$ and $C(a_{2})$ are both positive. However, depending upon the triplet $(\alpha,\beta,\gamma)$ we can consider following four different regions for Eq.\eqref{eq4} (see Fig\ref{fig2} for better clarification):
\begin{itemize}
\item[] {\bf Region I} ($\mathcal{R}_{I}$):  $\frac{\sin 2\beta}{\sin 2\alpha}\geq\tan^{2}\gamma\geq\frac{\sin 2\alpha}{\sin 2\beta}$. Here, $C(a_{3})=\cos^{2}\gamma \sin 2\beta - \sin^{2}\gamma \sin 2\alpha$ and $C(a_{4})=\sin^{2}\gamma \sin 2\beta- \cos^{2}\gamma \sin 2\alpha$.
\item[] {\bf Region II} ($\mathcal{R}_{II}$): $\frac{\sin 2\alpha}{\sin 2\beta}\geq\tan^{2}\gamma\geq\frac{\sin 2\beta}{\sin 2\alpha}$. Here, $C(a_{3})=\sin^{2}\gamma \sin 2\alpha-\cos^{2}\gamma \sin 2\beta$ and $C(a_{4})=\cos^{2}\gamma \sin 2\alpha-\sin^{2}\gamma \sin 2\beta$.
\item[] {\bf Region III} ($\mathcal{R}_{III}$): $\tan^{2}\gamma\geq\max\left\{\frac{\sin 2\alpha}{\sin 2\beta},\frac{\sin 2\beta}{\sin 2\alpha}\right\}$. Here, $C(a_{3})=\sin^{2}\gamma \sin 2\alpha-\cos^{2}\gamma \sin 2\beta$ and $C(a_{4})=\sin^{2}\gamma \sin 2\beta-\cos^{2}\gamma \sin 2\alpha$. 
\item[] {\bf Region IV} ($\mathcal{R}_{IV}$): $\tan^{2}\gamma\leq\min\left\{\frac{\sin 2\alpha}{\sin 2\beta},\frac{\sin 2\beta}{\sin 2\alpha}\right\}$. Here, $C(a_{3})=\cos^{2}\gamma \sin 2\beta-\sin^{2}\gamma \sin 2\alpha$ and $C(a_{4})=\cos^{2}\gamma \sin 2\alpha-\sin^{2}\gamma \sin 2\beta$.
\end{itemize}
Notably, the saturation of either inequality implies the corresponding state is product, as mentioned earlier.
\begin{figure}[!h]
\includegraphics[scale=0.4]{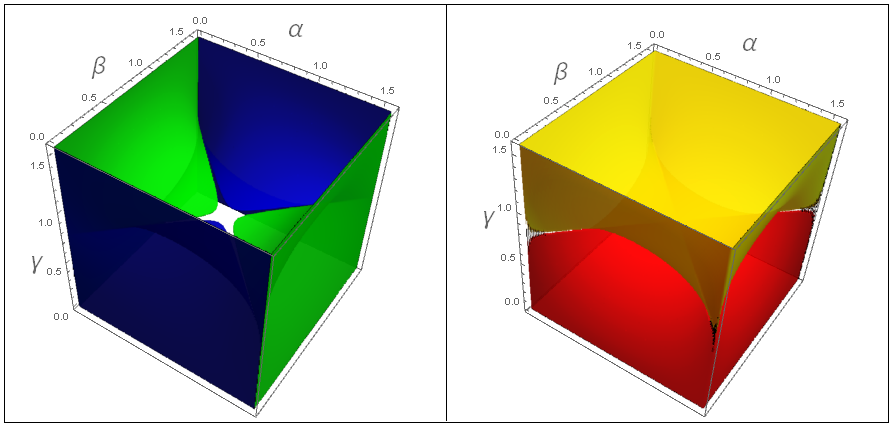}
\caption{(Color online) All the four regions depending upon $\{\alpha,\beta,\gamma\}$. \textcolor[rgb]{0,0.7,0}{Green} stands for the region $\mathcal{R}_{I}$, \textcolor[rgb]{0,0,1}{blue} for $\mathcal{R}_{II}$, \textcolor[rgb]{0.99, 0.64, 0.42}{yellow} for $\mathcal{R}_{III}$, and \textcolor[rgb]{1,0,0}{red} for $\mathcal{R}_{IV}$.}\label{fig2}
\end{figure}

\section{Proof of the Theorem(s) \& Proposition(s)}
\subsection{Proof of Proposition 1}
To prove the Proposition 1 we need to check NPT-ness of the projectors $\mathbb{P}_{ij}=\ket{a_{i}}\bra{a_{i}}+\ket{a_{j}}\bra{a_{j}}, \forall~i\neq j$. In the following we do this checking for general consideration, {\it i.e.} for the set $\mathcal{A}_{\gamma}^{[\alpha,\beta]}$. For instance, for arbitrary $\gamma$, eigenvalues for $\mathbb{P}_{12}^{\mathsf{T}_p}$ and $\mathbb{P}_{34}^{\mathsf{T}_p}$ turn out to be, 
\begin{subequations}\label{eq8}
\begin{align}
e_1=\frac{1}{4}\left(2-\sqrt{2}\left[2+\cos4\alpha-\cos4\beta\right]^{\frac{1}{2}}\right),~~~~
e_2=\frac{1}{4}\left(2+\sqrt{2}\left[2+\cos4\alpha-\cos4\beta\right]^{\frac{1}{2}}\right),\\
e_3=\frac{1}{4}\left(2-\sqrt{2}\left[2-\cos4\alpha+\cos4\beta\right]^{\frac{1}{2}}\right),~~~~
e_4=\frac{1}{4}\left(2+\sqrt{2}\left[2-\cos4\alpha+\cos4\beta\right]^{\frac{1}{2}}\right).
\end{align}
\end{subequations}
Here $\mathsf{T}_p$ denotes partial transposition. Clearly, $e_2~\&~e_4$ are always positive. However, almost everywhere in parameters space $(\alpha,\beta,\gamma)$ one of the eigenvalues either $e_1$ or $e_3$ is negative since $e_1e_3<0$ unless $\cos4\alpha=\cos4\beta$ (see Fig.\ref{fig3}) and hence ensures NPT-ness of the projectors $\mathbb{P}_{12}$ and $\mathbb{P}_{34}$.   
\begin{figure}[!htb]
\includegraphics[scale=0.5]{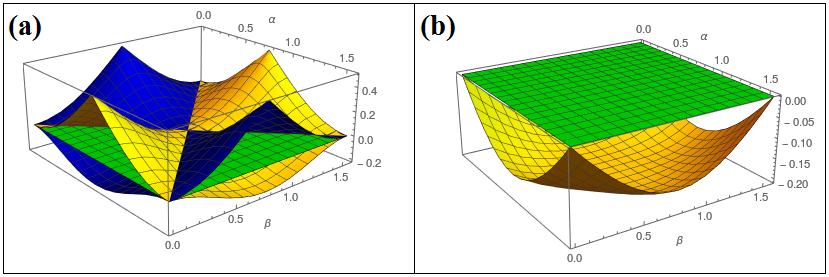}
\caption{(color online) \textcolor[rgb]{0,.5,0}{Green} corresponds to the zero ($z=0$) surface. (a) Eigenvalues $e_1$ (\textcolor{blue}{blue} surface) and $e_3$ (\textcolor[rgb]{0.99, 0.64, 0.42}{yellow} surface) of Eq.(\ref{eq8}).  Almost everywhere one of $e_1,e_3$ is negative. (b) Eigenvalue $e_3$ of Eq.(\ref{eq9}) (\textcolor[rgb]{0.99, 0.64, 0.42}{yellow} surface).}\label{fig3}
\end{figure}
For the other choices of the sub-spaces the generic expression for the eigenvalues of the partial transposition of the projectors onto them is much complicated. For instance, $\gamma=\pi/4$, eigenvalues of $\mathbb{P}_{13}^{\mathsf{T}_p},~\mathbb{P}_{14}^{\mathsf{T}_p},,~\mathbb{P}_{23}^{\mathsf{T}_p}$, and $\mathbb{P}_{24}^{\mathsf{T}_p}$ are given by
\footnotesize
\begin{subequations}\label{eq9}
\begin{align}
e_1&=\frac{1}{8} \left(4-\left(16-2 \sqrt{2} \left[(\sin (2 \alpha )+\sin (2 \beta ))^2 (18-12 \sin (2\alpha ) \sin (2\beta )-\cos (4 \alpha )-\cos (4 \beta ))\right]^{\frac{1}{2}}\right)^{\frac{1}{2}}\right),\\
e_2&=\frac{1}{8} \left(4+\left(16-2 \sqrt{2} \left[(\sin (2 \alpha )+\sin (2 \beta ))^2 (18-12 \sin (2\alpha ) \sin (2\beta ) -\cos (4 \alpha )-\cos (4 \beta ))\right]^{\frac{1}{2}}\right)^{\frac{1}{2}}\right),\\
e_3&=\frac{1}{8} \left(4-\left(16+2 \sqrt{2} \left[(\sin (2 \alpha )+\sin (2 \beta ))^2 (18-12 \sin (2\alpha ) \sin (2\beta ) -\cos (4 \alpha )-\cos (4 \beta ))\right]^{\frac{1}{2}}\right)^{\frac{1}{2}}\right),\\
e_4&=\frac{1}{8} \left(4+\left(16+2 \sqrt{2} \left[(\sin (2 \alpha )+\sin (2 \beta ))^2 (18-12 \sin (2\alpha ) \sin (2\beta ) -\cos (4 \alpha )-\cos (4 \beta ))\right]^{\frac{1}{2}}\right)^{\frac{1}{2}}\right).
\end{align}
\end{subequations}
\normalsize
Eigenvalue $e_3$ is negative for almost all values of $\alpha,\beta$ as depicted in Fig.\ref{fig3} ($e_3<0$ unless $\sin2\alpha=-\sin2\beta$). Furthermore, numerical verification (in MATHEMATICA) ensures NPT-ness of these projectors almost everywhere in the parameter space $(\alpha,\beta,\gamma)$.

\subsection{Proof of Theorem 3} 
\begin{proof}
An orthogonal basis $\mathcal{B}^{[3E1P]}$ containing three entangled states and one product state cannot be in category (i) of the Theorem 1. Furthermore $\mathcal{B}^{[3E1P]}$ cannot be in category (ii) of the Theorem 1, as any $1~vs~3$ grouping ends up with atleast two entangled states in the three-state group. Finally, $\mathcal{B}^{[3E1P]}$ cannot be in category (iii) of the Theorem 1 too. Any $2~vs~2$ grouping ends up with a group containing one product and one entangled state making the projector on this spanning subspace entangled (see {\bf Theorem  2} of Ref. \cite{Horodecki03(1)}). This completes the proof.  
\end{proof}
For the set $\mathcal{A}_{\gamma}^{[\alpha,\beta]}$, if we choose $\gamma=\tan^{-1}\left(\sqrt{\frac{\sin2\alpha}{\sin2\beta}}\right)$ or $\gamma=\tan^{-1}\left(\sqrt{\frac{\sin2\beta}{\sin2\alpha}}\right)$ and $\alpha,\beta\in(0,\pi/2)$, then the set contains exactly one product state and three entangled states. These constitute example of ensembles stated in Theorem 3.
\section{(In)distinguishability under adaptive SEP}
\subsection{Proof of Theorem 4}
\begin{proof}
An orthonormal basis of $(\mathbb{C}^2)^{\otimes2}$ is single-copy distinguishable under SEP {\it if and only if} all the basis states are product states \cite{Duan09} (see also Proposition 3 of Ref.\cite{Chitambar14}). For a basis containing entangled state(s), an adaptive discrimination protocol of SEP on the first copy will lead to one of the following two conclusions: (i) the state is from some group $\mathcal{G}_{ij}$ or from its complement group $\mathcal{G}^{\mathsf{C}}_{ij}$, (ii) it is from some group $\mathcal{G}_{l}$ or from its complement group $\mathcal{G}^{\mathsf{C}}_{l}$; $i,j,l\in\{1,2,3,4\}~\&~i\neq j$. For the first case the protocol will discriminate the corresponding density operator $\rho_{ij}$ and $\rho_{ij}^\perp$. The condition for perfect discrimination of two such rank two density operators under SEP is same as LOCC, {\it i.e.} they can be distinguished under SEP {\it if and only if} the projectors onto the supports of each of the density operators are separable \cite{Chitambar14}. Proposition 1, however, restrains this possibility. In the second case the protocol will sometime lead to the conclusion that the state is in the group $\mathcal{G}^{\mathsf{C}}_{l}$. In that case, perfect discrimination needs to be done from the second copy. An ensemble $\{\ket{\psi_i}\}_{i=1}^3$ of three orthogonal states of $(\mathbb{C}^2)^{\otimes2}$ is perfectly distinguishable by separable operations {\it if and only if}: (a) $\psi_k\phi^{-1}$ has two anti-parallel eigenvalues for each entangled state $\psi_i$, and (b) $\sum_{i=1}^3 C(\psi_i)= C(\phi)$ \cite{Duan09}. Here $\ket{\phi}$ is the state orthogonal to the subspace spanned by $\{\ket{\psi_i}\}_{i=1}^3$, and $C(\psi)=|\det(\psi)|$ denotes the concurrence of the state $\ket{\psi}:=(\mathbb{I}\otimes\psi)\ket{\phi^+_{\frac{\pi}{4}}}$, with $\psi$ being a $2\times 2$ matrix. In the following we check these conditions. 

To complete the argument, let us consider that after the protocol on first copy, none of the states is eliminated, rather the knowledge regarding the ensemble gets updated, {\it i.e.} the uniform ensemble  $\mathcal{E}_4\equiv\{1/4,\ket{\psi_i}\}$ ends up into some non-uniform ensemble $\mathcal{E}^\prime_4\equiv\{p_i,\ket{\psi_i}~|~p_i\neq0~\forall~i\}$. In such a case, the ensemble $\mathcal{E}^\prime$ must be discriminated from the second copy of the state. However this cannot be done as the argument presented above is independent of the prior probability distributions. This completes the proof.
\end{proof}
We consider the generic ensemble $\mathcal{A}_{\gamma}^{[\alpha,\beta]}$ and analyze its distinguishability under adaptive SEP. 

{\bf Region I:} Straightforward calculation yields $\mathbf{a_k}\mathbf{a_2}^{-1}$ has anti-parallel eigenvalues for all $k\in\{1,3,4\}$. Furthermore, 
\begin{equation*}
C(a_{1})+C(a_{3})+C(a_{4}) = \sin 2\beta=C(a_{2}).
\end{equation*}
Thus in the adaptive scheme, from the first copy it is determined whether the given state is $\ket{a_{2}}$ or not. In fact this can be done by LOCC \cite{Bandyopadhyay11}. From the second copy the rest three state can be perfectly discriminated under SEP \cite{Duan09}.

{\bf Region II:} This case is same as {\bf Region I} with $\ket{a_{2}}$ interchanged with $\ket{a_{1}}$. 

{\bf Region III:} Note that, 
\begin{subequations}
\begin{align}
C(a_{1})=\sum_{i\neq 1}C(a_{i})&\implies\tan^2\gamma=\frac{\sin 2\alpha}{\sin 2\beta},\\
C(a_{2})=\sum_{i\neq 2}C(a_{i})&\implies\tan^2\gamma=\frac{\sin 2\beta}{\sin 2\alpha},\\
C(a_{3})=\sum_{i\neq 3}C(a_{i})&\implies~~~~~~\beta=0,\\
C(a_{4})=\sum_{i\neq 4}C(a_{i})&\implies~~~~~~\alpha=0.
\end{align}
\end{subequations}
None of the above conditions can be satisfied in $\mathcal{R}_{III}$ with all of states in $\mathcal{A}_{\gamma}^{[\alpha,\beta]}$ entangled. Hence, none of the possible $1:3$ partition for the set $\mathcal{A}_{\gamma}^{[\alpha,\beta]}$ can be discriminated perfectly under adaptive SEP and therefore under adaptive LOCC. 
 
{\bf Region IV:}
\begin{subequations}
\begin{align}
C(a_{1})=\sum_{i\neq 1}C(a_{i})&\implies\tan^2\gamma=\frac{\sin 2\beta}{\sin 2\alpha},\\
C(a_{2})=\sum_{i\neq 2}C(a_{i})&\implies\tan^2\gamma=\frac{\sin 2\alpha}{\sin 2\beta},\\
C(a_{3})=\sum_{i\neq 3}C(a_{i})&\implies~~~~~~\alpha=0,\\
C(a_{4})=\sum_{i\neq 4}C(a_{i})&\implies~~~~~~\beta=0.
\end{align}
\end{subequations}
These imply impossibility of discriminating the set of the entangled states ensemble $\mathcal{A}_{\gamma}^{[\alpha,\beta]}$ in $\mathcal{R}_{IV}$ under adaptive SEP. 
\begin{figure}[!htb]
\includegraphics[scale=0.4]{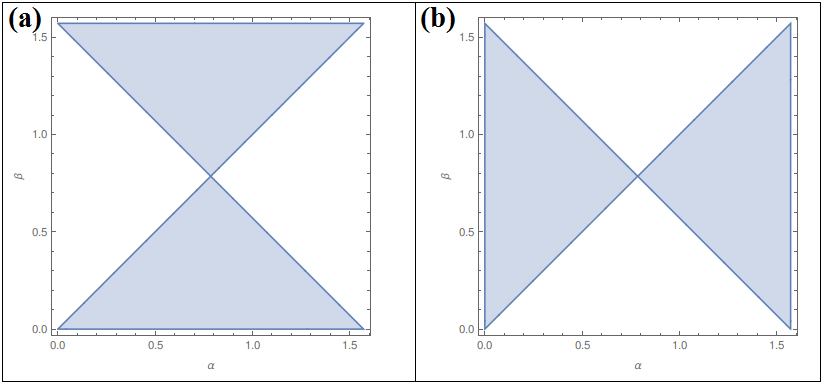}
\caption{(color online) (a) The \textcolor{blue}{blue} corresponds to $\mathcal{R}_{III}(\alpha,\beta)\subset\mathcal{R}_{III}$ where the ensemble $\mathcal{A}_{\gamma^\star}^{[\alpha,\beta]}$ can be perfectly distinguished under adaptive SEP. (b) Similar region $\mathcal{R}_{IV}(\alpha,\beta)\subset\mathcal{R}_{IV}$.}\label{fig4}
\end{figure}

The parametric class $\mathcal{A}_{\gamma}^{[\alpha,\beta]}$ also possess some interesting consequences. Note that, in $\mathcal{R}_{I}$ and $\mathcal{R}_{II}$ no adaptive LOCC scheme can accomplish the discrimination task perfectly while adaptive SEP can do so. For $\gamma=\gamma^\star=\tan^{-1}\left(\sqrt{\frac{\sin 2\alpha}{\sin 2\beta}}\right)$, the state $\ket{a_{4}}$ is a product state. Now for some values of $(\alpha,\beta)$ there will be some region in $\mathcal{R}_{III}$, let denote this region by $\mathcal{R}_{III}(\alpha,\beta)$, where the set $\mathcal{A}_{\gamma^\star}^{[\alpha,\beta]}$ can be discriminated under adaptive SEP. If can be checked that $C(a_{1})=\sum_{i\neq 1}C(a_{i})$ and also $\forall k \in \{2,3\},~\mathbf{a_{k}}\mathbf{a_{1}}^{-1}$ has two anti-parallel eigenvalues in $\mathcal{R}_{III}$. For same $\gamma^\star$ in the region $\mathcal{R}^{\mathsf{C}}_{III}(\alpha,\beta)$ complimentary to $\mathcal{R}_{III}(\alpha,\beta)$ the set $\mathcal{A}_{\gamma^\star}^{[\alpha,\beta]}$ can not be discriminated under adaptive SEP (see Fig. \ref{fig4}). A similar analysis can be done in $\mathcal{R}_{IV}$ and also for $\gamma=\tan^{-1}\left(\sqrt{\frac{\sin 2\beta}{\sin 2\alpha}}\right)$.


\begin{thebibliography}{99}

\bibitem{Chuang} I. Chuang and M. Nielsen; Quantum Computation and Quantum Information, Cambridge University Press.

\bibitem{Bennett99} C. H. Bennett, D. P. DiVincenzo, C. A. Fuchs, T. Mor, E. Rains, P. W. Shor, J. A. Smolin, and W. K. Wootters; Quantum nonlocality without entanglement,
\href{https://doi.org/10.1103/PhysRevA.59.1070}{Phys. Rev. A {\bf 59}, 1070 (1999)}.

\bibitem{Ghosh01} S. Ghosh, G. Kar, A. Roy, A. Sen(De), and U. Sen; Distinguishability of Bell States, \href{https://doi.org/10.1103/PhysRevLett.87.277902}{Phys. Rev. Lett. {\bf 87}, 277902 (2001)}.

\bibitem{Walgate02} J. Walgate and L. Hardy; Nonlocality, Asymmetry, and Distinguishing Bipartite States,
\href{https://doi.org/10.1103/PhysRevLett.89.147901}{Phys. Rev. Lett. {\bf 89}, 147901 (2002)}.

\bibitem{Peres91} A. Peres and W. K. Wootters; Optimal detection of quantum information,
\href{https://doi.org/10.1103/PhysRevLett.66.1119}{Phys. Rev. Lett. {\bf 66}, 1119 (1991)}. 

\bibitem{Chitambar13} E. Chitambar and Min-Hsiu Hsieh; Revisiting the optimal detection of quantum information,
\href{https://doi.org/10.1103/PhysRevA.88.020302}{Phys. Rev. A {\bf 88}, 020302(R) (2013)}.

\bibitem{Bennett99(1)} C. H. Bennett, D. P. DiVincenzo, T. Mor, P. W. Shor, J. A. Smolin, and B. M. Terhal; Unextendible Product Bases and Bound Entanglement, 
\href{https://doi.org/10.1103/PhysRevLett.82.5385}{Phys. Rev. Lett. {\bf 82}, 5385 (1999)}.

\bibitem{Walgate00} J. Walgate, A. J. Short, L. Hardy, and V. Vedral; Local Distinguishability of Multipartite Orthogonal Quantum States,
\href{https://doi.org/10.1103/PhysRevLett.85.4972}{Phys. Rev. Lett. {\bf 85}, 4972 (2000)}.

\bibitem{Horodecki03} M. Horodecki, A. Sen(De), U. Sen, and K. Horodecki; Local Indistinguishability: More Nonlocality with Less Entanglement,
\href{https://doi.org/10.1103/PhysRevLett.90.047902}{Phys. Rev. Lett. {\bf 90}, 047902 (2003)}.

\bibitem{DiVincenzo03} D. P. DiVincenzo, T. Mor, P. W. Shor, J. A. Smolin, B. M. Terhal; Unextendible Product Bases, Uncompletable Product Bases and Bound Entanglement,
\href{https://doi.org/10.1007/s00220-003-0877-6}{Comm. Math. Phys. {\bf 238}, 379 (2003)}.

\bibitem{Ghosh04} S. Ghosh, G. Kar, A. Roy, and D. Sarkar; Distinguishability of maximally entangled states,
\href{https://doi.org/10.1103/PhysRevA.70.022304}{Phys. Rev. A {\bf 70}, 022304 (2004)}.

\bibitem{Watrous05} J. Watrous; Bipartite Subspaces Having No Bases Distinguishable by Local Operations and Classical Communication,
\href{https://doi.org/10.1103/PhysRevLett.95.080505}{Phys. Rev. Lett. {\bf 95}, 080505 (2005)}.

\bibitem{Niset06} J. Niset and N. J. Cerf; Multipartite nonlocality without entanglement in many dimensions,
\href{https://doi.org/10.1103/PhysRevA.74.052103}{Phys. Rev. A {\bf 74}, 052103 (2006)}.

\bibitem{Duan07} R. Duan, Y. Feng, Z. Ji, and M. Ying; Distinguishing Arbitrary Multipartite Basis Unambiguously Using Local Operations and Classical Communication,
\href{https://doi.org/10.1103/PhysRevLett.98.230502}{Phys. Rev. Lett. {\bf 98}, 230502 (2007)}.

\bibitem{Calsamiglia10} J. Calsamiglia, J. I. de Vicente, R. Muñoz-Tapia, and E. Bagan; Local Discrimination of Mixed States, 
\href{https://doi.org/10.1103/PhysRevLett.105.080504}{Phys. Rev. Lett. {\bf 105}, 080504 (2010)}.

\bibitem{Bandyopadhyay11} S. Bandyopadhyay; More Nonlocality with Less Purity,
\href{https://doi.org/10.1103/PhysRevLett.106.210402}{Phys. Rev. Lett. {\bf 106}, 210402 (2011)}.

\bibitem{Chitambar14} E. Chitambar, R. Duan, and Min-Hsiu Hsieh; When do Local Operations and Classical Communication Suffice for Two-Qubit State Discrimination? \href{https://doi.org/10.1109/tit.2013.2295356}{IEEE Trans. Inform. Theory {\bf 60}, 1549 (2014)}.

\bibitem{Halder18} S. Halder; Several nonlocal sets of multipartite pure orthogonal product states,
\href{https://doi.org/10.1103/PhysRevA.98.022303}{Phys. Rev. A {\bf 98}, 022303 (2018)}.

\bibitem{Demianowicz18} M. Demianowicz and R. Augusiak; From unextendible product bases to genuinely entangled subspaces,
\href{https://doi.org/10.1103/PhysRevA.98.012313}{Phys. Rev. A {\bf 98}, 012313 (2018)}.

\bibitem{Halder19} S. Halder, M. Banik, S. Agrawal, and S. Bandyopadhyay; Strong Quantum Nonlocality without Entanglement, 
\href{https://doi.org/10.1103/PhysRevLett.122.040403}{Phys. Rev. Lett. {\bf 122}, 040403 (2019)}.

\bibitem{Agrawal19} S. Agrawal, S. Halder, and M. Banik; Genuinely entangled subspace with all-encompassing distillable entanglement across every bipartition,
\href{https://doi.org/10.1103/PhysRevA.99.032335}{Phys. Rev. A {\bf 99}, 032335 (2019)}.

\bibitem{Rout19} S. Rout, A. G. Maity, A. Mukherjee, S. Halder, and M. Banik; Genuinely nonlocal product bases: Classification and entanglement-assisted discrimination,
\href{https://doi.org/10.1103/PhysRevA.100.032321}{Phys. Rev. A {\bf 100}, 032321 (2019)}.

\bibitem{Rout20} S. Rout, A. G. Maity, A. Mukherjee, S. Halder, and M. Banik; Local State Discrimination and Ordering of Multipartite Entangled States,
\href{https://arxiv.org/abs/1910.14308}{arXiv:1910.14308 [quant-ph]}.

\bibitem{Horodecki09} R. Horodecki, P. Horodecki, M. Horodecki, and K. Horodecki; Quantum entanglement,
\href{https://doi.org/10.1103/RevModPhys.81.865}{Rev. Mod. Phys. {\bf 81}, 865 (2009)}.

\bibitem{Terhal01} B. M. Terhal, D. P. DiVincenzo, and D. W. Leung; Hiding Bits in Bell States,
\href{https://doi.org/10.1103/PhysRevLett.86.5807}{Phys. Rev. Lett. {\bf 86}, 5807 (2001)}.

\bibitem{Eggeling02} T. Eggeling and R. F. Werner; Hiding Classical Data in Multipartite Quantum States,
\href{https://doi.org/10.1103/PhysRevLett.89.097905}{Phys. Rev. Lett. {\bf 89}, 097905 (2002)}.

\bibitem{Matthews09} W. Matthews, S. Wehner, and A. Winter; Distinguishability of Quantum States Under Restricted Families of Measurements with an Application to Quantum Data Hiding,
\href{https://doi.org/10.1007/s00220-009-0890-5}{Commun. Math. Phys. 291, 813–843 (2009)}.

\bibitem{Hillery99} M. Hillery, V. Bužek, and A. Berthiaume; Quantum secret sharing,
\href{https://doi.org/10.1103/PhysRevA.59.1829}{Phys. Rev. A {\bf 59}, 1829 (1999)}.
 
\bibitem{Markham08} D. Markham and B. C. Sanders; Graph states for quantum secret sharing,
\href{https://doi.org/10.1103/PhysRevA.78.042309}{Phys. Rev. A {\bf 78}, 042309 (2008)}.

\bibitem{Rahaman15} R. Rahaman and Matthew G. Parker; Quantum scheme for secret sharing based on local distinguishability,
\href{https://doi.org/10.1103/PhysRevA.91.022330}{Phys. Rev. A {\bf 91}, 022330 (2015)}.

\bibitem{Fuchs} C. A. Fuchs; Just Two Nonorthogonal Quantum States, In: P. Kumar, G. M. D'Ariano, O. Hirota O. (eds) Quantum Communication, Computing, and Measurement 2. \href{https://doi.org/10.1007/0-306-47097-7_2}{Springer, Boston, MA}.  

\bibitem{Pusey12} M. F. Pusey, J. Barrett, and T. Rudolph; On the reality of the quantum state,
\href{https://doi.org/10.1038/nphys2309}{Nature Physics {\bf 8}, 475 (2012)}.

\bibitem{Bandyopadhyay17} S. Bandyopadhyay, M. Banik, S. S. Bhattacharya, S. Ghosh, G. Kar, A. Mukherjee, and A. Roy; Reciprocal ontological models show indeterminism of the order of quantum theory,
\href{https://doi.org/10.1007/s10701-016-0058-z}{Found Phys {\bf 47}, 265 (2017)}.

\bibitem{Bhattacharya20} S. S. Bhattacharya, S. Saha, T. Guha, and M. Banik; Nonlocality without entanglement: Quantum theory and beyond,
\href{https://doi.org/10.1103/PhysRevResearch.2.012068}{Phys. Rev. Research {\bf 2}, 012068(R) (2020)}.

\bibitem{Chitambar14(1)} E. Chitambar, D. Leung, L. Mancinska, M. Ozols, and A. Winter; Everything You Always Wanted to Know About LOCC (But Were Afraid to Ask),
\href{https://doi.org/10.1007/s00220-014-1953-9}{Commun. Math. Phys. {\bf 328}, 303 (2014)}.

\bibitem{Duan09} R. Duan, Y. Feng, Y. Xin, and M. Ying; Distinguishability of quantum states by separable operations, \href{https://doi.org/10.1109/tit.2008.2011524}{IEEE Trans. Inf. Theory {\bf 55}, 1320 (2009)}.

\bibitem{Yu12} N. Yu, R. Duan, and M. Ying; Four Locally Indistinguishable Ququad-Ququad Orthogonal Maximally Entangled States,
\href{https://doi.org/10.1103/PhysRevLett.109.020506}{Phys. Rev. Lett. {\bf 109}, 020506 (2012)}.

\bibitem{Yu14} N. Yu, R. Duan, and M. Ying; Distinguishability of Quantum States by Positive Operator-Valued Measures With Positive Partial Transpose,
\href{https://doi.org/10.1109/tit.2014.2307575}{IEEE Trans. Inform. Theory {\bf 60}, 2069 (2014)}.


\bibitem{Bandyopadhyay15} S. Bandyopadhyay, A. Cosentino, N. Johnston, V. Russo, J. Watrous, and N. Yu; Limitations on separable measurements by convex optimization,
\href{https://doi.org/10.1109/tit.2015.2417755}{IEEE Trans. Inf. Theory {\bf 61}, 3593 (2015)}.

\bibitem{Cohen15} S. M. Cohen; Class of unambiguous state discrimination problems achievable by separable measurements but impossible by local operations and classical communication,
\href{https://doi.org/10.1103/PhysRevA.91.012321}{Phys. Rev. A {\bf 91}, 012321 (2015}).

\bibitem{Chiribella11} G. Chiribella, G. Mauro D'Ariano, and P. Perinotti; Informational derivation of quantum theory,
\href{https://doi.org/10.1103/PhysRevA.84.012311}{Phys. Rev. A {\bf 84}, 012311 (2011)}.

\bibitem{Barnum14} H. Barnum and A. Wilce; Local Tomography and the Jordan Structure of Quantum Theory,
\href{https://doi.org/10.1007/s10701-014-9777-1}{Found Phys {\bf 44}, 192 (2014)}.

\bibitem{Niestegge20} G. Niestegge; Local tomography and the role of the complex numbers in quantum mechanics,
\href{https://doi.org/10.1098/rspa.2020.0063}{Proc. R. Soc. A. {\bf 476}, 20200063 (2020)}.

\bibitem{Hayashi09} M. Hayashi; Discrimination of two channels by adaptive methods and its application to quantum system,
\href{https://doi.org/10.1109/tit.2009.2023726}{IEEE Trans. Inf. Theory {\bf 55}, 3807 (2009)}.

\bibitem{Harrow10} A. W. Harrow, A. Hassidim, D. W. Leung, and J. Watrous; Adaptive versus nonadaptive strategies for quantum channel discrimination,
\href{https://doi.org/10.1103/PhysRevA.81.032339}{Phys. Rev. A {\bf 81}, 032339 (2010)}.

\bibitem{Pirandola19} S. Pirandola, R. Laurenza, C. Lupo, and J. L. Pereira; Fundamental limits to quantum channel discrimination,
\href{https://doi.org/10.1038/s41534-019-0162-y}{npj Quantum Inf {\bf 5}, 50 (2019)}.

\bibitem{Pirandola17} S. Pirandola and C. Lupo; Ultimate Precision of Adaptive Noise Estimation,
\href{https://doi.org/10.1103/PhysRevLett.118.100502}{Phys. Rev. Lett. {\bf 118}, 100502 (2017)}.

\bibitem{Cope17} T. P. W. Cope and S. Pirandola; Adaptive estimation and discrimination of Holevo Werner channels,
\href{https://doi.org/10.1515/qmetro-2017-0006}{Quantum Meas. Quantum Metrol. {\bf 4}, 44 (2017)}.

\bibitem{Supple} See Supplemental Material.

\bibitem{Badziag03} P. Badziag, M. Horodecki, A. Sen(De), and U. Sen; Locally Accessible Information: How Much Can the Parties Gain by Cooperating?
\href{https://doi.org/10.1103/PhysRevLett.91.117901}{Phys. Rev. Lett. {\bf 91}, 117901 (2003)}.

\bibitem{Fuchs97} C. A. Fuchs; Nonorthogonal Quantum States Maximize Classical Information Capacity,
\href{https://doi.org/10.1103/PhysRevLett.79.1162}{Phys. Rev. Lett. {\bf 79}, 1162 (1997)}.

\bibitem{King01} C. King and M. B. Ruskai; Capacity of quantum channels using product measurements,
\href{https://aip.scitation.org/doi/10.1063/1.1327598}{J. Math. Phys. {\bf 42}, 87 (2001)}.

\bibitem{Schumacher01} B. Schumacher and M. D. Westmoreland; Optimal signal ensembles,
\href{https://doi.org/10.1103/PhysRevA.63.022308}{Phys. Rev. A {\bf 63}, 022308 (2001)}. 

\bibitem{King02} C. King, M. Nathanson, and M. B. Ruskai; Qubit Channels Can Require More Than Two Inputs to Achieve Capacity,
\href{https://doi.org/10.1103/PhysRevLett.88.057901}{Phys. Rev. Lett. {\bf 88}, 057901 (2002)}.

\bibitem{Shamir79} A. Shamir; How to share a secret, 
\href{https://doi.org/10.1145/359168.359176}{Commun. {ACM} {\bf 22}, 612 (1979)}.

\bibitem{Self} Collaboration among certain groups of the four parties can reveal at-most $1$-bit data which is only half of the desired message. It is also not hard to argue that collaboration among any five parties cannot reveal $2$-bit message exactly. It should be noted that security of the protocol is assured under the adaptive protocol considered in Definition \ref{def1}. A more generic protocol might further demand limitation on the LOCC operations among different groups of the parties.

\bibitem{Bandyopadhyay11} S. Bandyopadhyay; More Nonlocality with Less Purity,
\href{https://doi.org/10.1103/PhysRevLett.106.210402}{Phys. Rev. Lett. {\bf 106}, 210402 (2011)}.

\bibitem{Wootters1998} W. K. Wootters; Entanglement of Formation of an Arbitrary State of Two Qubits, \href{https://doi.org/10.1103/PhysRevLett.80.2245}{Phys. Rev. Lett. {\bf 80}, 2245 (1998)}.

\bibitem{Horodecki03(1)} P. Horodecki, J. A. Smolin, B. M. Terhal, and A. V. Thapliyal; Rank two bipartite bound entangled states do not exist,
\href{https://doi.org/10.1016/s0304-3975(01)00376-0}{Theor. Comput. Sci.
{\bf 292}, 589 (2003)}.

\end{thebibliography}
 \end{document}